\numberwithin{equation}{section}
\newtheorem{theorem}{Theorem}
\newtheorem{example}{Example}
\newtheorem{proposition}{Proposition}
\theoremstyle{remark}
\newcommand{\real}{\operatorname{Re}}
\newcommand{\parti}[2]{\frac{\partial #1}{\partial #2}}
\newcommand{\intall}{\int_{-\infty}^{\infty}}
\newcommand{\Avg}[1]{\left\langle#1\right\rangle}
\newcommand{\bk}[1]{\left(#1\right)}
\newcommand{\Bk}[1]{\left[#1\right]}
\newcommand{\BK}[1]{\left\{#1\right\}}
\newcommand{\trace}{\operatorname{tr}}
\newcommand{\expect}{\mathbb E}
\newcommand{\supp}{\operatorname{supp}}
\begin{document}

\title{Poisson Quantum Information}

\author{Mankei Tsang}
\affiliation{Department of Electrical and Computer Engineering,
  National University of Singapore, 4 Engineering Drive 3, Singapore
  117583}

\affiliation{Department of Physics, National University of Singapore,
  2 Science Drive 3, Singapore 117551}

\email{mankei@nus.edu.sg}
\homepage{https://blog.nus.edu.sg/mankei/}
\orcid{0000-0001-7173-1239}

%\date{\today}

\maketitle

\begin{abstract}
  By taking a Poisson limit for a sequence of rare quantum objects,
  I derive simple formulas for the Uhlmann fidelity, the quantum
  Chernoff quantity, the relative entropy, and the Helstrom
  information. I also present analogous formulas in classical
  information theory for a Poisson model. An operator called the
  intensity operator emerges as the central quantity in the formalism
  to describe Poisson states.  It behaves like a density operator but
  is unnormalized. The formulas in terms of the intensity operators
  not only resemble the general formulas in terms of the density
  operators, but also coincide with some existing definitions of
  divergences between unnormalized positive-semidefinite matrices.
  Furthermore, I show that the effects of certain channels on Poisson
  states can be described by simple maps for the intensity operators.
\end{abstract}

\section{Introduction}

The Poisson limit theorems, also called the laws of rare events or the
laws of small numbers, underlie the ubiquity of Poisson statistics and
enable a variety of simplifications in probability theory
\cite{falk11,snyder_miller}. In this paper, I examine the consequences
of taking a similar limit in quantum information theory and
demonstrate that elegant formulas emerge under the limit. This body of
work may be called Poisson quantum information, which has the
potential to grow into a fruitful research topic on par with Gaussian
quantum information \cite{weedbrook,holevo19}.

The results put forth are a generalization and culmination of our
earlier efforts concerning weak thermal light
\cite{stellar,tnl,tnl2,tsang19a}. The results are especially relevant
to the recent literature on the application of quantum information
theory to partially coherent imaging
\cite{tsang19a,larson18,tsang_comment19,larson19,hradil19,liang21,wadood21,hradil21,de21,
  kurdzialek21}, where there is some confusion regarding the correct
formulas for the information quantities, leading to inconsistent
results by different groups. This paper resolves the debate in support
of Refs.~\cite{tnl,tsang_comment19} and extends the results, beyond
the Helstrom and Fisher information quantities considered there. While
special weak-thermal-light models have already found success by
enabling substantial simplifications in many previous studies in
quantum optics
\cite{helstrom,stellar,tnl,tsang19a,lu18,gottesman,khabiboulline19},
the general theory here stands on its own and does not require the
quantum state to be bosonic or exactly thermal, so it may be applied
to particles and systems beyond photons, such as electrons and
quasiparticles, whenever a quantum treatment of rare objects is
needed.

\section{\label{sec_poisson}Poisson states}
Let $\mathcal P(\mathcal H)$ be the set of positive-semidefinite
operators on a Hilbert space $\mathcal H$ and
$\mathcal P_1(\mathcal H)$ be the unit-trace subset
$\mathcal P_1(\mathcal H) \equiv \{\tau \in \mathcal P(\mathcal
H)|\trace \tau = 1\}$.  Consider the Hilbert space
$(\mathcal H_0\oplus\mathcal H_1)^{\otimes M}$ for $M$ temporal modes,
where $\mathcal H_0$ is a 1-dimensional vacuum Hilbert space and
$\mathcal H_1$ is a $d$-dimensional Hilbert space for a quantum
object.  Let the density operator
$\rho_M \in \mathcal P_1[(\mathcal H_0\oplus\mathcal H_1)^{\otimes
  M}]$ be
\begin{align}
\rho_M &= \tau^{\otimes M},
&
\tau &= \bigoplus_{l=0}^1 \pi_l\tau_l,
&
\pi_0 &= 1-\epsilon,
&
\pi_1 &= \epsilon,
\label{rare}
\end{align}
where $\tau_0 \in \mathcal P_1(\mathcal H_0)$ is the vacuum state,
$\tau_1 \in \mathcal P_1(\mathcal H_1)$ is the density operator for
the quantum object, and $0\le \epsilon \le 1$ is the probability of
having one object in each temporal mode. See, for example,
Ref.~\cite{watrous} for the definitions of the mathematical
concepts. In the context of optics, $\tau_1$ may denote the one-photon
density operator in $d$ spatial and polarization modes.  The
formalism, in itself, is generic however---the object can be any
elementary or composite quantum system. For example, to study
intensity interferometry \cite{mandel}, where two-photon-coincidence
events are postselected and one-photon events are ignored, $\tau_1$
can be used to denote the two-photon density operator.

Define the Poisson limit as
\begin{align}
\epsilon &\to 0, 
& M &\to \infty, &  
N &\equiv M\epsilon\quad\textrm{fixed},
\label{poisson_limit}
\end{align}
where $N$ is the expected object number in total.  In this limit, a
central quantity in the ensuing theory is
\begin{align}
\Gamma &\equiv N \tau_1 \in \mathcal P(\mathcal H_1),
\label{intensity_op}
\end{align}
which I call the intensity operator. Similar to $\tau_1$, it is a
positive-semidefinite operator on $\mathcal H_1$. Unlike $\tau_1$,
however, its trace
\begin{align}
\trace\Gamma &= N
\end{align}
is not normalized, as the object number need not be conserved in a
problem.

Let
$\mathcal N^{(k)} \equiv (\mathcal N^{(k)}_1,\mathcal
N^{(k)}_2,\dots)$ be the vectoral random variable from a multi-output
object-counting measurement of the $k$th temporal mode.  Let its
probability distribution be
\begin{align}
P_{\mathcal N^{(k)}}(0,\dots,0) &= \trace I_0 \tau
= 1 - \epsilon,
\label{P0}
\\
P_{\mathcal N^{(k)}}(0,\dots,n_j = 1,\dots,0)  &= 
\trace E_j\tau
= \epsilon \trace E_j\tau_1,
\label{P1}
\end{align}
where $I_l$ is the projection operator into $\mathcal H_l$ and $E$ is
a positive operator-valued measure (POVM) on $\mathcal H_1$ that
satisfies $E_j \in \mathcal P(\mathcal H_1)$ and $\sum_{j} E_j =
I_1$. Under the Poisson limit, the integrated random variable
$\mathcal M \equiv \sum_{k=1}^M \mathcal N^{(k)}$ has the Poisson
distribution \cite{falk11,vaart96}
\begin{align}
P_{\mathcal M}(m) &= \prod_{j} \exp(-\Lambda_j) \frac{\Lambda_j^{m_j}}{m_j!},
\label{poisson}
\end{align}
where each intensity value is given by
\begin{align}
\Lambda_j &= \trace E_j\Gamma.
\label{intensity}
\end{align}
Appendix~\ref{app_poisson} states, in more rigorous terms, a Poisson
limit theorem that gives Eqs.~(\ref{poisson}) and (\ref{intensity}).

In view of the Poissonian properties, the quantum state given by
Eqs.~(\ref{rare}), together with the Poisson limit given by
Eqs.~(\ref{poisson_limit}), may be called a Poisson state, denoted as
$\rho$ without the subscript $M$. Although a rigorous treatment of
Poisson states may require quantum stochastic calculus
\cite{parthasarathy92} or nonstandard analysis
\cite{leitz01,nelson87}, in the following I take the more convenient
approach of assuming Eqs.~(\ref{rare}) and taking the Poisson limit at
the end of a calculation.  Appendix~\ref{app_state} discusses how
Poisson states may be defined more rigorously in terms of Fock spaces.

\section{\label{sec_info}Information quantities}
To define information quantities, assume two Poisson states $\rho$ and
$\rho'$.  Assume that $M$ and $\tau_0$ are the same for the two
states, while the other quantities may vary. The quantities for the
second state are denoted with a prime; for example, the intensity
operator is $\Gamma'$ and $N' \equiv M\epsilon' =
\trace\Gamma'$. Assume further that $\epsilon' = O(\epsilon)$ (order
at most $\epsilon$ in the Poisson limit), such that the Poisson limit
applies to both states. My first proposition concerns the Uhlmann
fidelity \cite{uhlmann_crell}, the most notable application of which
is to set useful bounds for quantum hypothesis testing \cite{hayashi}.
\begin{proposition}
\label{prop_uhlmann}
The Uhlmann fidelity 
\begin{align}
F(\rho,\rho') &\equiv \trace\sqrt{\sqrt{\rho}\rho'\sqrt{\rho}}
\label{uhlmann}
\end{align}
between two Poisson states is given by
\begin{align}
F(\rho,\rho') &= \exp\Bk{-\frac{N+N'}{2} + F(\Gamma,\Gamma')}.
\label{uhlmann_poisson}
\end{align}
\end{proposition}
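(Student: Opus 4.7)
The plan is to exploit two elementary algebraic properties of the Uhlmann fidelity---multiplicativity on tensor products and additivity on direct sums---reduce the problem to a single mode, and then take the Poisson limit via the standard exponential limit $(1+x/M)^M\to e^x$.

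First I would write
\begin{align}
F(\rho,\rho') = F(\tau^{\otimes M},\tau'^{\otimes M}) = F(\tau,\tau')^M,
\nonumber
\end{align}
using the well-known multiplicativity of $F$ under tensor products. Next, since $\tau = (1-\epsilon)\tau_0\oplus \epsilon\tau_1$ and $\tau' = (1-\epsilon')\tau_0\oplus\epsilon'\tau_1'$ share the same block decomposition, square roots and products preserve the block structure, so
\begin{align}
F(\tau,\tau') = F\bk{(1-\epsilon)\tau_0,(1-\epsilon')\tau_0} + F\bk{\epsilon\tau_1,\epsilon'\tau_1'}.
\nonumber
\end{align}
The first term equals $\sqrt{(1-\epsilon)(1-\epsilon')}$ because $\tau_0$ is a pure (one-dimensional) vacuum state. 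For the second term I would use the homogeneity $F(cA,c'B)=\sqrt{cc'}\,F(A,B)$ for $c,c'\ge 0$ (immediate from the definition by pulling scalars out of the square root and trace), obtaining $\sqrt{\epsilon\epsilon'}\,F(\tau_1,\tau_1')=M^{-1}F(\Gamma,\Gamma')$, where the last equality follows from $\epsilon\epsilon' = NN'/M^2$ and $\Gamma = N\tau_1$, $\Gamma' = N'\tau_1'$.

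Combining these,
\begin{align}
F(\tau,\tau') = \sqrt{(1-\epsilon)(1-\epsilon')} + \frac{F(\Gamma,\Gamma')}{M}
= 1 + \frac{1}{M}\Bk{-\frac{N+N'}{2} + F(\Gamma,\Gamma')} + O(M^{-2}),
\nonumber
\end{align}
where I expand $\sqrt{(1-N/M)(1-N'/M)}$ to first order in $1/M$. Raising to the $M$th power and invoking $\lim_{M\to\infty}(1+x/M+O(M^{-2}))^M = e^x$ yields Eq.~(\ref{uhlmann_poisson}).

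The only step requiring any care is the last one: one must check that the subleading $O(M^{-2})$ corrections, which depend implicitly on $N$, $N'$ and the operators $\tau_1,\tau_1'$, do not spoil the limit. This is a routine uniform bound since $F(\Gamma,\Gamma')\le\sqrt{NN'}$ is finite and the assumption $\epsilon'=O(\epsilon)$ keeps $N,N'$ bounded, so no other step is a real obstacle---everything reduces to the algebra of block-diagonal positive operators and a standard exponential limit.
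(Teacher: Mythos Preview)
Your proof is correct and follows essentially the same route as the paper: multiplicativity of $F$ on tensor products, additivity on the direct sum $\tau=(1-\epsilon)\tau_0\oplus\epsilon\tau_1$, expansion of the vacuum block to leading order, and the exponential limit. The only cosmetic difference is that the paper tracks the expansion in $\epsilon$ while you track it in $1/M$, which are equivalent under $\epsilon=N/M$.
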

\begin{proof}
  Given Eqs.~(\ref{rare}) and using basic linear algebra (see, for
  example, Refs.~\cite[Chap.~1]{watrous} and
  \cite[Proposition~19.1]{parthasarathy92}), it can be shown that
\begin{align}
F(\rho_M,\rho_M') &= [F(\tau,\tau')]^M
= \Bk{\sum_l F(\pi_l\tau_l,\pi_l'\tau_l')}^M
= \Bk{1 - \frac{\epsilon+\epsilon'}{2} + \sqrt{\epsilon\epsilon'}
F(\tau_1,\tau_1') + O(\epsilon^2)}^M.
\end{align}
Taking the Poisson limit then leads to the proposition.
\end{proof}
Equation~(\ref{uhlmann_poisson}) has a self-similar feature: it
contains a fidelity expression, in terms of the intensity operators
$\Gamma$ and $\Gamma'$, that has the same form as the general formula
for the density operators. Another remarkable feature is that the
$d_B^2$ quantity, defined in the following expression
\begin{align}
-2\ln F(\rho,\rho')
&= N +N' - 2F(\Gamma,\Gamma') \equiv d_B^2(\Gamma,\Gamma'),
\label{bures}
\end{align}
coincides with the squared Bures-Wasserstein distance between
unnormalized positive-semidefinite operators
\cite{uhlmann_crell,bhatia19}.

The next two propositions concern the quantum Chernoff quantity and
the relative entropy. The Chernoff quantity is used in the quantum
Chernoff bound for hypothesis testing
\cite{audenaert,nussbaum09,hayashi}, while the relative entropy is, of
course, a fundamental quantity in quantum thermodynamics
\cite{wehrl78} and communication theory \cite{hayashi,holevo19}.
% The proofs are similar to that of Proposition~\ref{prop_uhlmann} and
% omitted for brevity.
\begin{proposition}
\label{prop_qchernoff}
The quantum Chernoff quantity
\begin{align}
C_s(\rho,\rho') &\equiv \trace \rho^{s} \rho'^{1-s},
\quad
0 \le s \le 1
\label{qchernoff}
\end{align}
for two Poisson states is given by
\begin{align}
C_s(\rho,\rho') &= \exp\Bk{-sN-(1-s)N' + C_s(\Gamma,\Gamma')}.
\label{qchernoff_poisson}
\end{align}
The quantum Chernoff distance \cite{audenaert,nussbaum09} is then
\begin{align}
-\ln\Bk{\inf_{0\le s \le 1} C_s(\rho,\rho')} &= 
\sup_{0\le s \le 1} \Bk{sN+(1-s)N' - C_s(\Gamma,\Gamma')}.
\label{qchernoff_distance}
\end{align}
\end{proposition}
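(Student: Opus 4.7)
The plan is to mirror the strategy of Proposition~\ref{prop_uhlmann}: exploit the tensor-product/direct-sum structure of $\rho_M$ and $\rho_M'$, reduce everything to a single-mode quantity, expand to first order in $\epsilon,\epsilon'$, and take the Poisson limit.

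First I would use the well-known multiplicativity of the Chernoff quantity under tensor products, namely $C_s(\tau^{\otimes M},\tau'^{\otimes M}) = [\trace(\tau^s\tau'^{1-s})]^M = [C_s(\tau,\tau')]^M$, which follows from $(\tau^{\otimes M})^s = (\tau^s)^{\otimes M}$ together with $\trace(A\otimes B) = \trace A\cdot\trace B$. Next, since $\tau=\bigoplus_l\pi_l\tau_l$ is block-diagonal with respect to $\mathcal H_0\oplus\mathcal H_1$, the same holds for $\tau^s = \bigoplus_l \pi_l^s\tau_l^s$, so that
\begin{align}
C_s(\tau,\tau') &= \sum_{l=0}^1 \pi_l^s\pi_l'^{1-s}\trace(\tau_l^s\tau_l'^{1-s})
= (1-\epsilon)^s(1-\epsilon')^{1-s} + \epsilon^s\epsilon'^{1-s}C_s(\tau_1,\tau_1'),
\end{align}
where $\trace(\tau_0^s\tau_0'^{1-s})=1$ because $\tau_0$ is the one-dimensional vacuum state.

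The key algebraic move is to absorb the $M$-dependence into the intensity operators. Since $\Gamma = M\epsilon\,\tau_1$, we have $\Gamma^s = M^s\epsilon^s\tau_1^s$ and $\Gamma'^{1-s} = M^{1-s}\epsilon'^{1-s}\tau_1'^{1-s}$, so
\begin{align}
\epsilon^s\epsilon'^{1-s}C_s(\tau_1,\tau_1') = \frac{1}{M}C_s(\Gamma,\Gamma').
\end{align}
Combined with the Taylor expansion $(1-\epsilon)^s(1-\epsilon')^{1-s} = 1 - s\epsilon - (1-s)\epsilon' + O(\epsilon^2)$, this gives
\begin{align}
C_s(\rho_M,\rho_M') = \Bk{1 + \frac{1}{M}\bk{-sN-(1-s)N'+C_s(\Gamma,\Gamma')} + O(\epsilon^2)}^M.
\end{align}
Taking the Poisson limit (\ref{poisson_limit}) and using $(1+x/M + o(1/M))^M \to e^x$ yields (\ref{qchernoff_poisson}). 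Equation~(\ref{qchernoff_distance}) then follows immediately by taking $-\ln$ of (\ref{qchernoff_poisson}) and using $-\ln\inf_s(\cdot) = \sup_s[-\ln(\cdot)]$ since the logarithm is monotone.

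I do not anticipate any single dominant obstacle, but the trickiest bookkeeping is the hypothesis $\epsilon' = O(\epsilon)$ needed to ensure that the $O(\epsilon^2)$ remainder from the binomial-type expansion (including the cross term $\epsilon^s\epsilon'^{1-s}$, which is $O(\epsilon)$ rather than $O(\epsilon^2)$ but is already exactly accounted for via $C_s(\Gamma,\Gamma')/M$) truly vanishes when raised to the $M$-th power. This is exactly the same subtlety encountered in Proposition~\ref{prop_uhlmann}, and the standard Poisson-limit estimate applies. One minor caveat to flag is that the edge cases $s\in\{0,1\}$ make $C_s(\rho,\rho')=1$ trivially, and the formula still holds because $C_0(\Gamma,\Gamma') = \trace\Gamma' = N'$ and $C_1(\Gamma,\Gamma') = \trace\Gamma = N$, so the exponent collapses to zero as expected.
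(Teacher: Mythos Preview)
Your approach is essentially the same as the paper's: multiplicativity of $C_s$ under tensor products, additivity under direct sums, first-order expansion in $\epsilon,\epsilon'$, then the Poisson limit. Your explicit rewriting $\epsilon^s\epsilon'^{1-s}C_s(\tau_1,\tau_1')=C_s(\Gamma,\Gamma')/M$ is a nice touch that the paper leaves implicit.

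One point the paper addresses that you pass over: for Eq.~(\ref{qchernoff_distance}) you say it ``follows immediately'' by taking $-\ln$ and infimizing over $s$. The paper is more cautious here, noting that what has been established is only \emph{pointwise} convergence of $C_s(\rho_M,\rho_M')$ to the right-hand side of Eq.~(\ref{qchernoff_poisson}); to conclude that the Poisson limit of $\inf_s C_s(\rho_M,\rho_M')$ equals $\inf_s$ of the limit one would need uniform convergence in $s$ over $[0,1]$, which is not proved. The paper resolves this by appealing to the Fock-space representation of the Poisson state, for which the limit has already been taken and Eq.~(\ref{qchernoff_distance}) holds without ambiguity. Your derivation is fine if one interprets $\rho$ as the Fock-space Poisson state from the outset, but it is worth flagging the interchange-of-limit-and-infimum issue rather than treating it as immediate.
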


\begin{proposition}
\label{prop_KL}
The relative entropy
\begin{align}
D(\rho\|\rho') &\equiv 
\trace \rho'-\trace\rho + \trace \rho \bk{\ln \rho-\ln\rho'}
\label{KL}
\end{align}
between two Poisson states is given by
\begin{align}
D(\rho\|\rho') &= D(\Gamma\|\Gamma').
\label{KLpoisson}
\end{align}
If $\supp\Gamma \not\subseteq \supp\Gamma'$, where $\supp$ denotes the
support \cite{holevo19}, then $\supp\rho \not\subseteq \supp \rho'$,
and $D(\rho\|\rho') = D(\Gamma\|\Gamma') = \infty$.
\end{proposition}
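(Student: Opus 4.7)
The plan is to mirror the structure of the proof of Proposition~\ref{prop_uhlmann}: reduce the $M$-mode quantity to a single-mode one by additivity, exploit the block-diagonal form of $\tau$ to compute that single-mode relative entropy in closed form, expand to leading order in $\epsilon$, and take $M\to\infty$ with $N=M\epsilon$ fixed.

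First I would use additivity on tensor products. Since $\rho_M=\tau^{\otimes M}$ and $\rho_M'=\tau'^{\otimes M}$ are both normalized, the $\trace\rho'-\trace\rho$ piece in~(\ref{KL}) vanishes at finite $M$, and $\ln(\tau^{\otimes M})=\sum_k I^{\otimes (k-1)}\otimes\ln\tau\otimes I^{\otimes(M-k)}$ immediately gives $D(\rho_M\|\rho_M')=M\,D(\tau\|\tau')$.

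Next I would compute $D(\tau\|\tau')$ explicitly from the block-diagonal decomposition~(\ref{rare}). Because $\tau_0$ is the identity on the one-dimensional vacuum space, $\ln\tau=\ln(1-\epsilon)\,I_0\oplus\bk{(\ln\epsilon)\,I_1+\ln\tau_1}$, and similarly for $\tau'$. Tracing against $\tau$ and collecting terms yields
\begin{equation*}
D(\tau\|\tau')=(1-\epsilon)\ln\frac{1-\epsilon}{1-\epsilon'}+\epsilon\ln\frac{\epsilon}{\epsilon'}+\epsilon\, D(\tau_1\|\tau_1').
\end{equation*}
Now multiply by $M$ and take the Poisson limit. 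Using $\ln(1-\epsilon)=-\epsilon+O(\epsilon^2)$, the vacuum piece becomes $M(\epsilon'-\epsilon)+O(M\epsilon^2)\to N'-N$; the middle piece is already $N\ln(N/N')$ because the $M$'s cancel; and the last piece is simply $N\,D(\tau_1\|\tau_1')$. Summing,
\begin{equation*}
D(\rho\|\rho')=N'-N+N\ln\frac{N}{N'}+N\,D(\tau_1\|\tau_1').
\end{equation*}
Substituting $\Gamma=N\tau_1$, so that $\ln\Gamma=(\ln N)\,I_1+\ln\tau_1$ on $\supp\tau_1$, a direct computation of the right-hand side of~(\ref{KL}) for $\Gamma,\Gamma'$ reproduces exactly this expression, establishing~(\ref{KLpoisson}).

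For the support claim, I would note that for $\epsilon\in(0,1)$ one has $\supp\tau=\mathcal H_0\oplus\supp\tau_1$, so $\supp\Gamma\not\subseteq\supp\Gamma'$ is equivalent to $\supp\tau\not\subseteq\supp\tau'$, which in turn gives $\supp\rho_M\not\subseteq\supp\rho_M'$ via the tensor power. The standard argument that $\trace\rho(\ln\rho-\ln\rho')=\infty$ whenever the support condition fails then applies both to $(\rho,\rho')$ and to $(\Gamma,\Gamma')$, consistent with the identity~(\ref{KLpoisson}).

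The step I expect to be most delicate is the expansion of $\ln(1-\epsilon)$: one must check that the discarded $O(\epsilon^2)$ remainders contribute $O(M\epsilon^2)=O(N\epsilon)\to 0$ uniformly in the Poisson limit, so that the equality (rather than just an asymptotic) is recovered. A secondary subtlety is interpreting $\ln\tau_1$ and $\ln\Gamma$ when $\tau_1$ is singular; this is handled by restricting all operator logarithms to the respective supports and treating the $\supp\tau_1\not\subseteq\supp\tau_1'$ case separately as above.
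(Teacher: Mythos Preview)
Your proposal is correct and follows essentially the same route as the paper: additivity on tensor products, block-diagonal decomposition of $D(\tau\|\tau')$, and the Poisson limit of the vacuum term $M(1-\epsilon)\ln\frac{1-\epsilon}{1-\epsilon'}\to N'-N$. The only cosmetic difference is that the paper recognizes $M\epsilon\,\trace\tau_1\bk{\ln(\epsilon\tau_1)-\ln(\epsilon'\tau_1')}=\trace\Gamma\bk{\ln\Gamma-\ln\Gamma'}$ exactly (the $\ln M$ factors cancel) before taking the limit, whereas you split this piece into $N\ln(N/N')+N\,D(\tau_1\|\tau_1')$ and then verify afterwards that it equals $D(\Gamma\|\Gamma')-(N'-N)$.
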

The proofs of Propositions~\ref{prop_qchernoff} and \ref{prop_KL} are
delegated to Appendix~\ref{app_proofs}.

In Eq.~(\ref{KL}), the relative entropy is expressed in a more general
form so that it is appropriate for unnormalized positive-definite
operators as well \cite{lindblad73,dhillon07,amari16}. Similar to
Eq.~(\ref{uhlmann_poisson}), Eqs.~(\ref{qchernoff_poisson}) and
(\ref{KLpoisson}) have a self-similar feature. It is also remarkable
that the $D_s(\Gamma,\Gamma')$ and $D(\Gamma\|\Gamma')$ quantities,
defined in the following expressions
\begin{align}
-\frac{1}{s(1-s)}\ln C_s(\rho,\rho')
&= \frac{1}{s(1-s)}\Bk{sN + (1-s)N' - C_s(\Gamma,\Gamma')} \equiv 
D_s(\Gamma,\Gamma'),
\label{alpha_div_Gamma}
\\
D(\Gamma\|\Gamma') &\equiv N'-N + \trace \Gamma\bk{\ln\Gamma-\ln\Gamma'},
\label{KL_Gamma}
\end{align}
coincide with the alpha-divergences between unnormalized
positive-semidefinite matrices in the matrix-analysis literature
\cite{dhillon07,amari16}---$D_s(\Gamma,\Gamma')$ here is identical to
Eq.~(4.179) in Ref.~\cite{amari16} if one sets $s = (1-\alpha)/2$ and
$D(\Gamma\|\Gamma')$ here is identical to Eq.~(4.166) in
Ref.~\cite{amari16}.

The classical Poisson model given by Eq.~(\ref{poisson}) leads to
analogous formulas for the corresponding quantities in classical
information theory. The proofs are trivial and omitted for brevity;
see, for example, Ref.~\cite{kailath67} for similar results.  Assume
two Poisson distributions and again denote the quantities for the
second distribution with a prime.
\begin{proposition}
\label{prop_chernoff}
The Chernoff quantity
\begin{align}
C_s(P_{\mathcal M},P_{\mathcal M}') &\equiv 
 \sum_m P_{\mathcal M}^s P_{\mathcal M}'^{1-s},
\quad
0 \le s \le 1
\end{align}
for two Poisson distributions is given by
\begin{align}
C_s(P_{\mathcal M},P_{\mathcal M}') &= 
\exp\Bk{-sN-(1-s)N' + C_s(\Lambda,\Lambda')}.
\label{Bpoi}
\end{align}
The Chernoff distance  is then
\begin{align}
-\ln\Bk{\inf_{0\le s \le 1} C_s(P_{\mathcal M},P_{\mathcal M}')} &= 
\sup_{0\le s \le 1} \Bk{sN+(1-s)N' - C_s(\Lambda,\Lambda')}.
\label{chernoff_distance}
\end{align}
\end{proposition}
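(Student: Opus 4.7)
The plan is to exploit the product structure of the multi-dimensional Poisson distribution in Eq.~(\ref{poisson}) and reduce everything to a single-component Chernoff computation. First I would observe that $P_{\mathcal M}(m) = \prod_j e^{-\Lambda_j}\Lambda_j^{m_j}/m_j!$ factorizes across $j$, which implies
\begin{align}
C_s(P_{\mathcal M},P_{\mathcal M}') = \prod_j \sum_{m_j=0}^{\infty} \bk{e^{-\Lambda_j}\frac{\Lambda_j^{m_j}}{m_j!}}^{s}\bk{e^{-\Lambda_j'}\frac{\Lambda_j'^{m_j}}{m_j!}}^{1-s}.
\end{align}

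Next I would compute each single-index factor. Pulling the deterministic exponentials out and recognizing the remaining sum as a Taylor series gives
\begin{align}
\sum_{m=0}^{\infty} \frac{(\Lambda_j^{s}\Lambda_j'^{1-s})^{m}}{m!} = \exp\bk{\Lambda_j^{s}\Lambda_j'^{1-s}},
\end{align}
so each factor equals $\exp[-s\Lambda_j-(1-s)\Lambda_j' + \Lambda_j^{s}\Lambda_j'^{1-s}]$. Taking the product over $j$ and using $\sum_j \Lambda_j = \sum_j \trace E_j\Gamma = \trace\Gamma = N$, together with $\sum_j \Lambda_j' = N'$ and the natural definition $C_s(\Lambda,\Lambda') \equiv \sum_j \Lambda_j^{s}\Lambda_j'^{1-s}$ (the classical analogue of $\trace\Gamma^{s}\Gamma'^{1-s}$), yields Eq.~(\ref{Bpoi}) at once.

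For Eq.~(\ref{chernoff_distance}), I would simply apply the fact that $-\ln$ is continuous and monotone decreasing on $(0,\infty)$, so $-\ln \inf_{s} C_s = \sup_{s}(-\ln C_s)$, and substitute Eq.~(\ref{Bpoi}) into the right-hand side. No real obstacle arises; the only minor points worth noting are the conventions $0^{s}=0$ for $s>0$ when some $\Lambda_j$ or $\Lambda_j'$ vanishes (which is consistent with the summand being $0$ in the original Chernoff sum) and the fact that convergence of the Taylor series above is automatic for all finite intensities. In this sense the proof is a direct classical counterpart of Proposition~\ref{prop_qchernoff}, with the factorization over temporal modes in the quantum case replaced here by the factorization of the Poisson distribution over measurement outcomes.
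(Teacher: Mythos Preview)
Your argument is correct: the factorization over $j$, the single-mode Taylor-series identity, and the monotone transformation $-\ln$ give Eqs.~(\ref{Bpoi}) and (\ref{chernoff_distance}) exactly as you wrote. The paper in fact omits this proof entirely, stating just before Proposition~\ref{prop_chernoff} that ``The proofs are trivial and omitted for brevity''; your direct computation is precisely the trivial argument the author alludes to.
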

\begin{proposition}
\label{prop_KLclassical}
The relative entropy
\begin{align}
D(P_{\mathcal M}\|P_{\mathcal M}') &\equiv  \sum_m \bk{P_{\mathcal M}'-P_{\mathcal M}
+ P_{\mathcal M} \ln \frac{P_{\mathcal M}}{P_{\mathcal M}'}}
\end{align}
between two Poisson distributions is given by
\begin{align}
D(P_{\mathcal M}\|P_{\mathcal M}') &= D(\Lambda\|\Lambda').
\label{KLpoi}
\end{align}
If $\supp \Lambda \not\subseteq \supp \Lambda'$, then
$\supp P_{\mathcal M} \not\subseteq \supp P_{\mathcal M}'$, and
$D(P_{\mathcal M}\|P_{\mathcal M}') = D(\Lambda\|\Lambda') = \infty$.
\end{proposition}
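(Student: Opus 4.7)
The plan is to exploit the independence of the Poisson coordinates and reduce the identity to a one-output moment calculation. Since $P_{\mathcal M}$ and $P_{\mathcal M}'$ are normalized probability distributions, $\sum_m(P_{\mathcal M}'-P_{\mathcal M})=0$, so the general definition of $D(P_{\mathcal M}\|P_{\mathcal M}')$ collapses to the standard Kullback--Leibler divergence $\sum_m P_{\mathcal M}\ln(P_{\mathcal M}/P_{\mathcal M}')$. Using the product form of Eq.~(\ref{poisson}), the log-ratio splits cleanly across outputs,
\begin{align}
\ln\frac{P_{\mathcal M}(m)}{P_{\mathcal M}'(m)} = \sum_j\Bk{-(\Lambda_j-\Lambda_j')+m_j\ln\frac{\Lambda_j}{\Lambda_j'}},
\end{align}
and the coordinates $\mathcal M_j$ are mutually independent with $\expect\mathcal M_j=\Lambda_j$.

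Taking the expectation termwise, I would then obtain
\begin{align}
D(P_{\mathcal M}\|P_{\mathcal M}') = \sum_j\Bk{\Lambda_j'-\Lambda_j+\Lambda_j\ln\frac{\Lambda_j}{\Lambda_j'}} = (N'-N)+\sum_j\Lambda_j\ln\frac{\Lambda_j}{\Lambda_j'},
\end{align}
where I used $\sum_j\Lambda_j=N$ and $\sum_j\Lambda_j'=N'$. This matches $D(\Lambda\|\Lambda')$ obtained from the classical analog of Eq.~(\ref{KL_Gamma}), completing the main identity.

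For the support statement, I would pick an index $j_0$ with $\Lambda_{j_0}>0$ and $\Lambda_{j_0}'=0$. Every outcome $m$ with $m_{j_0}\ge 1$ then has $P_{\mathcal M}(m)>0$ and $P_{\mathcal M}'(m)=0$, so $\supp P_{\mathcal M}\not\subseteq\supp P_{\mathcal M}'$, and the corresponding summand is $+\infty$ under the convention $a\ln(a/0)=+\infty$ for $a>0$; the very same convention makes the $\Lambda_{j_0}\ln(\Lambda_{j_0}/\Lambda_{j_0}')$ term in $D(\Lambda\|\Lambda')$ infinite as well. The main obstacle, such as it is, is purely bookkeeping: consistently applying $0\ln 0=0$ and the above divergence convention at zero coordinates so that the finite and infinite cases on the two sides line up. No Poisson limit, operator theory, or subtle interchange of sums and limits is required---the identity reduces to an elementary first-moment calculation for independent Poisson variables.
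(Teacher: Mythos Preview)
Your proof is correct. The paper does not actually provide a proof for this proposition---it states that ``The proofs are trivial and omitted for brevity; see, for example, Ref.~\cite{kailath67} for similar results''---and your direct first-moment computation via the product form of Eq.~(\ref{poisson}) is precisely the elementary argument being alluded to.
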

Equations~(\ref{Bpoi}) and (\ref{KLpoi}) resemble
Eqs.~(\ref{qchernoff_poisson}) and (\ref{KLpoisson}) and also possess
a self-similar feature. Similar to the quantum case, the
$D_s(\Lambda,\Lambda')$ and $D(\Lambda\|\Lambda')$ quantities, defined
in the following expressions
\begin{align}
-\frac{1}{s(1-s)}\ln C_s(P_{\mathcal M},P_{\mathcal M}')
&= \frac{1}{s(1-s)}\Bk{s N + (1-s)N' - C_s(\Lambda,\Lambda')}
\equiv 
D_s(\Lambda,\Lambda'),
\label{alpha_div}
\\
D(\Lambda\|\Lambda') &\equiv 
N' - N + \sum_j \Lambda_j \ln \frac{\Lambda_j}{\Lambda_j'},
\end{align}
coincide with the alpha-divergences between unnormalized positive
distributions \cite{cichocki10}.  $D_{1/2}(\Lambda,\Lambda')/2$, in
particular, is the squared Hellinger distance \cite{bhatia19}.

A fundamental relation between the quantum and classical information
quantities is monotonicity. It is interesting to note that it
manifests for the Poisson states on two levels: on the level of $\rho$
and on the level of $\Gamma$.
\begin{proposition}
\label{prop_mono}
\begin{align}
F(\rho,\rho') &\le C_{1/2}(P_{\mathcal M},P_{\mathcal M}'),
\label{fidelity_bound}
\\
C_s(\rho,\rho') &\le C_s(P_{\mathcal M},P_{\mathcal M}'),
\\
D(\rho\|\rho') &\ge D(P_{\mathcal M}\|P_{\mathcal M}').
\label{KL_bound}
\end{align}
\end{proposition}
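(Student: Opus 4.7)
The plan is to invoke the standard quantum data-processing inequalities (DPI) for the three quantities at finite $M$ and then take the Poisson limit. The natural measurement is already specified by Eqs.~(\ref{P0})-(\ref{P1}): on each mode $k$ I apply the POVM $\{I_0\}\cup\{0_{\mathcal H_0}\oplus E_j\}_j$ on $\mathcal H_0\oplus\mathcal H_1$, producing $\mathcal N^{(k)}$, and I take the product of these POVMs over all $M$ modes, producing the joint outcome $(\mathcal N^{(1)},\dots,\mathcal N^{(M)})$ with some distribution $Q_M$. Because $\rho_M=\tau^{\otimes M}$ has tensor-product structure, $Q_M$ is just the $M$-fold product of the single-mode marginal of Eqs.~(\ref{P0})-(\ref{P1}).

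Applied to $\rho_M,\rho_M'$ under this measurement, the standard quantum DPIs yield $F(\rho_M,\rho_M')\le F(Q_M,Q_M')$, $C_s(\rho_M,\rho_M')\le C_s(Q_M,Q_M')$, and $D(\rho_M\|\rho_M')\ge D(Q_M\|Q_M')$. Since $\mathcal M=\sum_k \mathcal N^{(k)}$ is a deterministic function of the outcomes, classical DPI preserves each inequality when one passes from $Q_M,Q_M'$ to the marginal pair $P_{\mathcal M},P_{\mathcal M}'$. Using the classical identity $F(P,P')=C_{1/2}(P,P')$ for the first line, this gives the three claimed inequalities with $\rho_M,\rho_M'$ on the left and $P_{\mathcal M},P_{\mathcal M}'$ on the right at every finite $M$.

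To finish I take the Poisson limit. The left-hand sides converge to $F(\rho,\rho')$, $C_s(\rho,\rho')$, and $D(\rho\|\rho')$ by Propositions~\ref{prop_uhlmann}-\ref{prop_KL}, while on the right-hand side Eq.~(\ref{poisson}) together with Propositions~\ref{prop_chernoff} and \ref{prop_KLclassical} supplies the corresponding Poisson limits. For fidelity and Chernoff the convergence is continuous through the closed-form exponential expressions, so the inequalities carry over immediately; for relative entropy, lower semicontinuity of $D$ (or direct inspection of the explicit limit in Proposition~\ref{prop_KLclassical}, including the support statement which handles the $\infty$ case trivially) does the job.

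The hard part will be essentially nothing beyond bookkeeping: all three DPIs (Uhlmann for fidelity, operator concavity of $A\mapsto A^s$ for Chernoff, Lindblad for relative entropy) are textbook, the classical summation step is a one-line application of classical DPI, and Propositions~\ref{prop_uhlmann}-\ref{prop_KLclassical} already determine both sides of the inequalities in the Poisson limit.
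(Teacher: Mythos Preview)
Your approach is essentially the paper's primary proof: the paper simply states that ``these bounds follow directly from the monotonicity relations,'' which is exactly the quantum-then-classical DPI route you spell out. The paper also records a second, slightly different argument that avoids any limit-taking: one writes $\Gamma=N\tau_1$, $\Lambda_j=N p_j$ with $p_j=\trace E_j\tau_1$, applies the standard DPIs to the normalized pair $(\tau_1,\tau_1')$ versus $(p,p')$, and then multiplies through by the appropriate powers of $N,N'$ to obtain $F(\Gamma,\Gamma')\le C_{1/2}(\Lambda,\Lambda')$, $C_s(\Gamma,\Gamma')\le C_s(\Lambda,\Lambda')$, $D(\Gamma\|\Gamma')\ge D(\Lambda\|\Lambda')$, whereupon Propositions~\ref{prop_uhlmann}--\ref{prop_KLclassical} translate these directly into the claimed inequalities. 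That alternative is cleaner in that it never touches finite $M$ or convergence issues; your route is more conceptual in that it shows the $\Gamma$-level monotonicity is a shadow of the $\rho$-level one.
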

\begin{proof}
  These bounds follow directly from the monotonicity relations
  \cite{hayashi}. Alternatively, one may take $\Gamma = N \tau_1$ and
  $\Gamma' = N'\tau_1'$, define
  $p_j \equiv \trace E_j\tau_1 = \Lambda_j/N$ and
  $p_j' \equiv \trace E_j\tau_1' = \Lambda_j'/N'$, and apply
  the monotonicity relations with respect to $(\tau_1,\tau_1')$ and
  $(p,p')$ in order to obtain
\begin{align}
F(\Gamma,\Gamma')&= \sqrt{NN'}F(\tau_1,\tau_1') 
\le \sqrt{NN'}C_{1/2}(p,p') = C_{1/2}(\Lambda,\Lambda'),
\\
C_s(\Gamma,\Gamma') &= N^sN'^{1-s}C_s(\tau_1,\tau_1')
\le N^sN'^{1-s} C_s(p,p') = C_s(\Lambda,\Lambda'),
\\
D(\Gamma\|\Gamma') &=  N' - N + N \ln \frac{N}{N'} + N D(\tau_1\|\tau_1')
\ge N' - N + N \ln \frac{N}{N'} + N D(p\|p') = D(\Lambda\|\Lambda'),
\end{align}
which also lead to the bounds via
Propositions~\ref{prop_uhlmann}--\ref{prop_KLclassical}.
\end{proof}

Last but not the least, I present propositions concerning the Helstrom
information and the Fisher information \cite{hayashi,helstrom}, which
play crucial roles in parameter estimation.  Let $\epsilon$ and
$\tau_1$ be functions of an unknown vectoral parameter
$\theta \equiv (\theta_1,\dots,\theta_q) \in \Theta \subseteq \mathbb
R^q$. It follows that $\rho_M$ and $\Gamma$ are also functions of
$\theta$.
\begin{proposition}
\label{prop_helstrom}
Define the $q\times q$ Helstrom information matrix as
\begin{align}
K_{\mu\nu}(\rho_M) &\equiv \trace \bk{\sigma_\mu\circ\sigma_\nu} \rho_M,
\label{K}
\end{align}
where $a\circ b \equiv (ab+ba)/2$ denotes the Jordan product and
$\sigma_\mu$ is a symmetric logarthmic derivative (SLD) of $\rho_M$,
defined as a Hermitian-operator solution to
\begin{align}
\parti{\rho_M}{\theta_\mu} &= \sigma_\mu \circ \rho_M.
\label{SLD}
\end{align}
For a Poisson state, $K$ is given by
\begin{align}
K_{\mu\nu}(\rho) &= K_{\mu\nu}(\Gamma) = \trace \bk{S_\mu \circ S_\nu}\Gamma,
\label{K_Gamma}
\end{align}
where $S_\mu$ is an SLD of $\Gamma$, viz., a Hermitian-operator
solution to
\begin{align}
\parti{\Gamma}{\theta_\mu} &= S_\mu \circ \Gamma.
\label{SLD_Gamma}
\end{align}
\end{proposition}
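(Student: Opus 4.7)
The plan is to combine the tensor-product structure of $\rho_M = \tau^{\otimes M}$ with the direct-sum structure $\tau = (1-\epsilon)\tau_0 \oplus \epsilon\tau_1$, so that an SLD of $\rho_M$ can be built mode by mode from an SLD of $\tau$, which in turn splits into a scalar vacuum block and a block on $\mathcal H_1$ that will be identified with the SLD $S_\mu$ of $\Gamma$.

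For the single-mode SLD I would adopt the block-diagonal ansatz $s_\mu = s_\mu^{(0)} \oplus s_\mu^{(1)}$, which turns Eq.~(\ref{SLD}) into two decoupled equations. The vacuum block reduces to the scalar $s_\mu^{(0)} = -\parti{\epsilon}{\theta_\mu}/(1-\epsilon)$, and on $\mathcal H_1$ one obtains
\begin{align*}
\parti{\epsilon}{\theta_\mu}\tau_1 + \epsilon\,\parti{\tau_1}{\theta_\mu} = \epsilon\bk{s_\mu^{(1)}\circ\tau_1}.
\end{align*}
Dividing through by $\epsilon$ and comparing with $\parti{\Gamma}{\theta_\mu} = N\bk{S_\mu\circ\tau_1}$ (using $N = M\epsilon$ and $\parti{N}{\theta_\mu}/N = \parti{\epsilon}{\theta_\mu}/\epsilon$) identifies $s_\mu^{(1)} = S_\mu$. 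The off-diagonal blocks of the ansatz are killed by the Jordan-product equation because $(1-\epsilon)I + \epsilon\tau_1$ is invertible for $\epsilon<1$, so the block-diagonal ansatz is consistent.

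Next, using the standard additivity of SLDs for i.i.d.\ product states, $\sigma_\mu = \sum_{k=1}^M s_\mu^{(k)}$ with $s_\mu^{(k)}$ acting on the $k$th copy, I expand the Jordan product in $K_{\mu\nu}(\rho_M)$. Cross terms with $k\ne k'$ commute and factorize as $\trace(s_\mu\tau)\trace(s_\nu\tau)$; tracing Eq.~(\ref{SLD}) for $\tau$ gives $\trace(s_\mu\tau) = \parti{\trace\tau}{\theta_\mu} = 0$, so only the $M$ diagonal terms survive, yielding $K_{\mu\nu}(\rho_M) = M\trace\Bk{(s_\mu\circ s_\nu)\tau}$. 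Decomposing this single-mode trace by block produces a vacuum contribution $M\bk{\parti{\epsilon}{\theta_\mu}}\bk{\parti{\epsilon}{\theta_\nu}}/(1-\epsilon)$ and an object contribution $M\epsilon\trace\Bk{(S_\mu\circ S_\nu)\tau_1} = \trace\Bk{(S_\mu\circ S_\nu)\Gamma}$. Because $\parti{\epsilon}{\theta_\mu} = \parti{N}{\theta_\mu}/M = O(1/M)$ in the Poisson limit, the vacuum piece is $O(1/M)$ and drops out, leaving Eq.~(\ref{K_Gamma}).

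The main obstacle I anticipate is bookkeeping the non-uniqueness of SLDs on the kernel of the state — particularly the vacuum block as $\epsilon\to 0$ and the kernel of $\Gamma$ when $\tau_1$ is rank-deficient. Any such ambiguity should drop out of the Helstrom trace because that quantity depends only on the action of the SLD on $\supp\rho_M$ (respectively $\supp\Gamma$), but this should be stated explicitly rather than being left implicit in the block computation.
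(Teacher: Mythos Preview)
Your proposal is correct and follows essentially the same route as the paper: decompose $K(\rho_M)=MK(\tau)$ via the tensor product, split the single-mode SLD into a scalar vacuum block $-(\partial_\mu\epsilon)/(1-\epsilon)$ and the $\mathcal H_1$ block $S_\mu$ (the SLD of $\Gamma$), and observe that the vacuum contribution $\frac{1}{M(1-\epsilon)}\partial_\mu N\,\partial_\nu N$ vanishes in the Poisson limit. The only differences are presentational---you re-derive the i.i.d.\ additivity via the cross-term argument and justify the block-diagonality of $s_\mu$ through the invertibility of $(1-\epsilon)I_1+\epsilon\tau_1$, whereas the paper simply cites both facts.
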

\begin{proof}
  Given the tensor product in Eqs.~(\ref{rare}), it is known that
  \cite{hayashi}
\begin{align}
K(\rho_M) = M K(\tau).
\end{align}
Given the direct sum in Eqs.~(\ref{rare}), it can also be shown that
\begin{align}
K_{\mu\nu}(\tau)&= \sum_l \trace \bk{S_{\mu}^{(l)}\circ S_{\nu}^{(l)}}\pi_l \tau_l,
\end{align}
where $S_\mu^{(l)}$ is an SLD of $\pi_l\tau_l$.
%\begin{align}
%\parti{(\pi_l\tau_l)}{\theta_\mu} &= S_{\mu}^{(l)} \circ \bk{\pi_l\tau_l}.
%\end{align}
Then
$S_\mu^{(0)} = (\partial \pi_0/\partial\theta_\mu)/\pi_0 =
-(\partial\epsilon/\partial\theta_\mu)/(1-\epsilon)$,
$S_\mu^{(1)} = S_{\mu}$, and
\begin{align}
K_{\mu\nu}(\rho_M) &= \frac{1}{M(1-\epsilon)}\parti{N}{\theta_\mu} \parti{N}{\theta_\nu}
+ \trace \bk{S_\mu \circ S_\nu}\Gamma.
\end{align}
Taking the Poisson limit leads to the proposition.
\end{proof}
\begin{proposition}[{well known; see, for example,
    Refs.~\cite{falk11,snyder_miller}}]
\label{prop_fisher}
The Fisher information matrix
\begin{align}
J_{\mu\nu}(P_{\mathcal M}) &\equiv \sum_{m} P_{\mathcal M}
\parti{\ln P_{\mathcal M}}{\theta_\mu}\parti{\ln P_{\mathcal M}}{\theta_\nu}
\end{align}
for a Poisson distribution is given by
\begin{align}
J(P_{\mathcal M}) &= J(\Lambda).
\end{align}
\end{proposition}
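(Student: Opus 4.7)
The plan is to compute $J(P_{\mathcal M})$ directly from the Poisson probability mass function in Eq.~(\ref{poisson}) and recognise the outcome as the natural unnormalised-intensity analogue of the Fisher information, in the same spirit as the definition of $K(\Gamma)$ in Proposition~\ref{prop_helstrom}.

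First I would expand the log-likelihood as an independent sum over modes,
\begin{align*}
\ln P_{\mathcal M}(m) = \sum_j \bk{-\Lambda_j + m_j\ln\Lambda_j - \ln m_j!},
\end{align*}
and differentiate to obtain the score
\begin{align*}
\parti{\ln P_{\mathcal M}}{\theta_\mu} = \sum_j \bk{\frac{m_j}{\Lambda_j}-1}\parti{\Lambda_j}{\theta_\mu}.
\end{align*}
Second, I would invoke the defining Poisson properties that the components $\mathcal M_j$ are mutually independent with $\expect \mathcal M_j = \var \mathcal M_j = \Lambda_j$, so that the cross expectations satisfy $\expect\Bk{\bk{\mathcal M_j/\Lambda_j-1}\bk{\mathcal M_k/\Lambda_k-1}} = \delta_{jk}/\Lambda_j$. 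Substituting this back collapses the double sum in the defining expression for $J_{\mu\nu}(P_{\mathcal M})$ to the single sum
\begin{align*}
J_{\mu\nu}(P_{\mathcal M}) = \sum_j \frac{1}{\Lambda_j}\parti{\Lambda_j}{\theta_\mu}\parti{\Lambda_j}{\theta_\nu},
\end{align*}
which depends on the data only through the intensity vector $\Lambda$.

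No serious obstacle is anticipated: the derivation is a textbook calculation once one uses independence and the mean-equals-variance identity. The only conceptual nuance is fixing the convention that $J(\Lambda)$ refers to the Fisher information of $\Lambda$ treated as an unnormalised positive measure, \emph{defined} by the displayed single-sum formula. With this convention, which mirrors the unnormalised-operator philosophy used throughout the paper (and matches the classical analogue of the SLD construction in Proposition~\ref{prop_helstrom}), the asserted identity $J(P_{\mathcal M}) = J(\Lambda)$ follows at once, with no appeal to the Poisson limit beyond the Poisson form of $P_{\mathcal M}$ already established in Eq.~(\ref{poisson}).
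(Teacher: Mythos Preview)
Your computation is correct and is exactly the standard textbook derivation. The paper does not actually give a proof of this proposition: it is labelled ``well known'' with references to Refs.~\cite{falk11,snyder_miller}, in the same spirit as Propositions~\ref{prop_chernoff} and~\ref{prop_KLclassical}, whose proofs are explicitly ``omitted for brevity.'' So there is nothing to compare against; your argument simply fills in what the paper leaves to the literature, and your identification of $J(\Lambda)$ with the unnormalised-intensity Fisher information $\sum_j \Lambda_j^{-1}(\partial_\mu\Lambda_j)(\partial_\nu\Lambda_j)$ is indeed the intended reading, consistent with the self-similarity theme and with Proposition~\ref{prop_helstrom}.
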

Again, with the Poisson limit, the Helstrom and Fisher information
quantities observe a self-similar feature---they are given by the same
formulas as the general ones, except that the unnormalized $\Gamma$ or
$\Lambda$ is substituted into each.

The information quantities obey a monotonicity relation, which follows
from the monotonicity of Helstrom information in two ways.
\begin{proposition}
\label{prop_mono_fisher}
\begin{align}
K(\rho) = K(\Gamma) &\ge J(\Lambda) = J(P_{\mathcal M}),
\end{align}
in the sense that $K(\rho)-J(P_{\mathcal M})$ and
$K(\Gamma)-J(\Lambda)$ are positive-semidefinite.
\end{proposition}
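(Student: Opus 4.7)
The plan is to prove the positive-semidefinite inequality $K(\Gamma) - J(\Lambda) \ge 0$, which by Propositions~\ref{prop_helstrom} and \ref{prop_fisher} is the same statement as $K(\rho) - J(P_{\mathcal M}) \ge 0$, via two complementary routes hinted at in the proposition: one at the level of the finite-$M$ quantum state $\rho_M$ and one intrinsically at the level of the single-mode intensity operator $\Gamma = N\tau_1$.

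For the first route I would invoke the standard quantum Cram\'er--Rao monotonicity $K(\rho_M) \ge J(P_{\mathcal N^{(1)},\ldots,\mathcal N^{(M)}})$ for the product measurement with POVM $\{I_0,E_j\}$ on each of the $M$ modes. Independence gives the right-hand side as $M\,J(P_{\mathcal N^{(1)}})$, and a direct expansion of the single-mode Fisher matrix using Eqs.~(\ref{P0})--(\ref{P1}) isolates a vacuum contribution of the form $(M(1-\epsilon))^{-1}\partial_\mu N\,\partial_\nu N$ from a term that is exactly $J_{\mu\nu}(\Lambda)$. This vacuum term is in exact parallel with the vacuum piece already extracted for $K(\rho_M)$ in the proof of Proposition~\ref{prop_helstrom}. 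Taking the Poisson limit sends both vacuum corrections to zero, and the matrix inequality passes to the limit because only $q\times q$ matrices are involved, giving $K(\Gamma) \ge J(\Lambda) = J(P_{\mathcal M})$.

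For the second route I would work entirely in $\mathcal H_1$. The ansatz $S_\mu = (\partial_\mu \ln N)I + L_\mu$, with $L_\mu$ an SLD of $\tau_1$, satisfies Eq.~(\ref{SLD_Gamma}) on $\supp\tau_1$. Expanding the Jordan product $S_\mu \circ S_\nu$, using $\trace L_\mu\tau_1 = 0$ (which follows from $\trace\tau_1 = 1$), and multiplying by $N$ yields the decomposition $K_{\mu\nu}(\Gamma) = N^{-1}\partial_\mu N\,\partial_\nu N + N\,K_{\mu\nu}(\tau_1)$. An algebraically identical manipulation applied to $\Lambda_j = N\,\trace E_j\tau_1$, using $\sum_j p_j = 1$ with $p_j \equiv \trace E_j\tau_1$, produces the same first term plus $N\,J_{\mu\nu}(p)$. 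Subtracting, the vacuum-like term cancels and $K(\Gamma) - J(\Lambda) = N[K(\tau_1) - J(p)] \ge 0$ by applying the standard quantum Cram\'er--Rao bound to $\tau_1$ under $E$. The main obstacle in this route is the usual SLD ambiguity on $\ker\tau_1$, which I expect to be harmless because the Helstrom matrix is the expectation $\trace(S_\mu \circ S_\nu)\Gamma$, so any kernel-only contribution to $S_\mu$ drops out of the decomposition.
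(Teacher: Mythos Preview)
Your proposal is correct and follows essentially the same two-level strategy as the paper's proof: the paper invokes monotonicity directly at the level of $\rho$ and $P_{\mathcal M}$ (your Route~1, with the finite-$M$ expansion being extra bookkeeping the paper omits), and then independently at the level of $\Gamma$ and $\Lambda$ by splitting off the $N$-dependence and reducing to the standard bound $K(\tau_1)\ge J(p)$ (your Route~2, including the decomposition $K_{\mu\nu}(\Gamma)=N^{-1}\partial_\mu N\,\partial_\nu N + N K_{\mu\nu}(\tau_1)$, which the paper records as Eq.~(\ref{K_tau1})). Your treatment is simply a fleshed-out version of the paper's sketch.
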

\begin{proof}
  The monotonicity \cite{hayashi} holds on two levels:
  $K(\rho) \ge J(P_{\mathcal M})$, and also
  $K(\Gamma) \ge J(\Lambda)$, which can be proved independently by
  expressing $\Gamma$ and $\Lambda$ in terms of $N$, $\tau_1$, and
  $p = \Lambda/N$ and applying the monotonicity relation with respect
  to $\tau_1$ and $p$.
\end{proof}

\section{\label{sec_fock}Poisson states versus Fock states}

It is important to emphasize that the Poisson theory is in general
different from the usual theory for the Fock state
$\tau_1^{\otimes L}$. For example, from Eq.~(\ref{uhlmann_poisson}),
the fidelity for the Poisson states can be expressed in terms of
$\tau_1$ and $\tau_1'$ as
\begin{align}
F(\rho,\rho') &= \exp\Bk{-\frac{N+N'}{2} + \sqrt{NN'}F(\tau_1,\tau_1')},
\end{align}
which is quite different from
\begin{align}
F(\tau_1^{\otimes L},\tau_1'^{\otimes L}) &= \Bk{F(\tau_1,\tau_1')}^L.
\end{align}
As $\Gamma$ and $\Gamma'$ are not normalized, their fidelity is
bounded as
\begin{align}
0 &\le F(\Gamma,\Gamma') =\sqrt{NN'}F(\tau_1,\tau_1') \le \sqrt{NN'}.
\end{align}
When $F(\tau_1,\tau_1') = 1$ and $F(\Gamma,\Gamma') = \sqrt{NN'}$,
$F(\rho,\rho')$ in the Poisson theory is still less than 1 if
$N \neq N'$, because the different expected object numbers still
lead to distinguishability. On the other hand, if
$F(\tau_1,\tau_1') = 0$ and $F(\Gamma,\Gamma') = 0$, $F(\rho,\rho')$
is still positive, because both states contain the identical $\tau_0$.
Another example is the Helstrom information
\begin{align}
K_{\mu\nu}(\rho) &= 
K_{\mu\nu}(\Gamma) = N\parti{\ln N}{\theta_\mu}\parti{\ln N}{\theta_\nu}
+ N K_{\mu\nu}(\tau_1),
\label{K_tau1}
\end{align}
which is different from
\begin{align}
K(\tau_1^{\otimes L}) = L K (\tau_1).
\end{align}
It is not difficult to prove that, on the per-object basis,
\begin{align}
\frac{K(\Gamma)}{N} &\ge K(\tau_1),
\end{align}
as $N$ may also depend on $\theta$ and the total object number may
give extra information.

In the context of optics, the thermal state in an ultraviolet limit
can be shown to approach the Poisson state, with the mutual coherence
matrix in statistical optics \cite{mandel} becoming a matrix
representation of $\Gamma$ \cite{stellar,tnl,tsang19a}. Alternative
analyses of the exact thermal state have yielded results that are
consistent with the Poisson theory
\cite{tsang19,nair_tsang16,lupo,lu18}.  In the context of partially
coherent imaging, Propositions~\ref{prop_helstrom} and
\ref{prop_fisher} are consistent with our treatment in
Ref.~\cite{tsang_comment19}, although many other studies on this topic
\cite{larson18,larson19,hradil19,liang21,hradil21} compute the
Helstrom information using $K(\tau_1)$ only and may underestimate the
amount of information.  The Poisson model is more realistic than the
Fock model because the former can account for the effects of
inefficiency and loss, which are unavoidable for sources imaged with a
finite aperture \cite{kurdzialek21}.  The use of $K(\tau_1)$ is
justified only if $N$ does not depend on $\theta$, in which case
$K(\Gamma) = NK(\tau_1)$.  I illustrate these concepts with a concrete
example.

\begin{example}
\label{exa_imaging}
Consider the imaging of two equally bright partially coherent optical
sources via a diffraction-limited system in one dimension
\cite{tsang19a,larson18,larson19,hradil19,liang21,hradil21,tsang_comment19,wadood21,de21,kurdzialek21}.
The intensity operator can be modeled as \cite{tsang_comment19}
\begin{align}
\Gamma &= N_0 
\bk{\ket{\psi_1}\bra{\psi_1} + \ket{\psi_2}\bra{\psi_2}
+\gamma \ket{\psi_1}\bra{\psi_2} + \gamma^*\ket{\psi_2}\bra{\psi_1}},
\label{Gamma_imaging}
\\
\ket{\psi_1} &= \intall dx \psi\bk{x + \frac{\theta}{2}}\ket{x},
\quad
\ket{\psi_2} = \intall dx \psi\bk{x - \frac{\theta}{2}}\ket{x},
\label{psi12}
\end{align}
where $N_0 \in \mathbb R_{\ge 0} \equiv \{x \in \mathbb R|x \ge 0\}$
is the expected photon number from one isolated source,
$\gamma \in \mathbb C$ is the degree of coherence with
$|\gamma| \le 1$, $\psi:\mathbb R \to \mathbb C$ is the point-spread
function of the imaging system, $\ket{x}$ is the Dirac position
eigenket that obeys $\braket{x|x'} = \delta(x-x')$, and
$\theta \in \mathbb R_{\ge 0}$ is the separation between the two
sources in Airy units. The expected total photon number is
\begin{align}
N &= \trace\Gamma = 2N_0 \Bk{1 + \real\bk{\gamma\braket{\psi_2|\psi_1}}}.
\end{align}
When $\gamma = 0$ (incoherent sources), $N = 2N_0$ is independent of
$\theta$, but when $\gamma \neq 0$, $N$ may depend on $\theta$, as the
waves from the two sources can interfere, thus enhancing or
suppressing the radiation energy.  For example, when $\gamma = 1$
(fully coherent and in-phase sources) and $\theta = 0$, $N = 4N_0$,
which is consistent with the elementary fact that $n$ identical
in-phase sources with zero separations should radiate at a power
$\propto n^2$ \cite{eberly72}.

To compute the Helstrom information $K(\Gamma)$, I assume the Gaussian
point-spread function
\begin{align}
\psi(x) &= \frac{1}{(2\pi)^{1/4}}\exp\bk{-\frac{x^2}{4}}
\label{gauss_psi}
\end{align}
and use the steps detailed in Appendix~\ref{app_imaging} to compute
Eqs.~(\ref{K_Gamma}) and (\ref{SLD_Gamma}) numerically.  The results
are plotted in Fig.~\ref{imaging_helstrom}.  Almost identical results
are recently reported in Ref.~\cite[Fig.~2(a)]{kurdzialek21}, which
uses the rare-photon model given by Eqs.~(\ref{rare}) and a theory
consistent with the one here.  The use of the intensity operator here
is arguably more direct and convenient, however, as
Eq.~(\ref{Gamma_imaging}) comes naturally from optics and can be used
directly in Eqs.~(\ref{K_Gamma}) and (\ref{SLD_Gamma}), without
returning to the rare-photon model or treating $\epsilon$ and $\tau_1$
separately in the calculations.

\begin{figure}[htbp!]
\centerline{\includegraphics[width=0.6\textwidth]{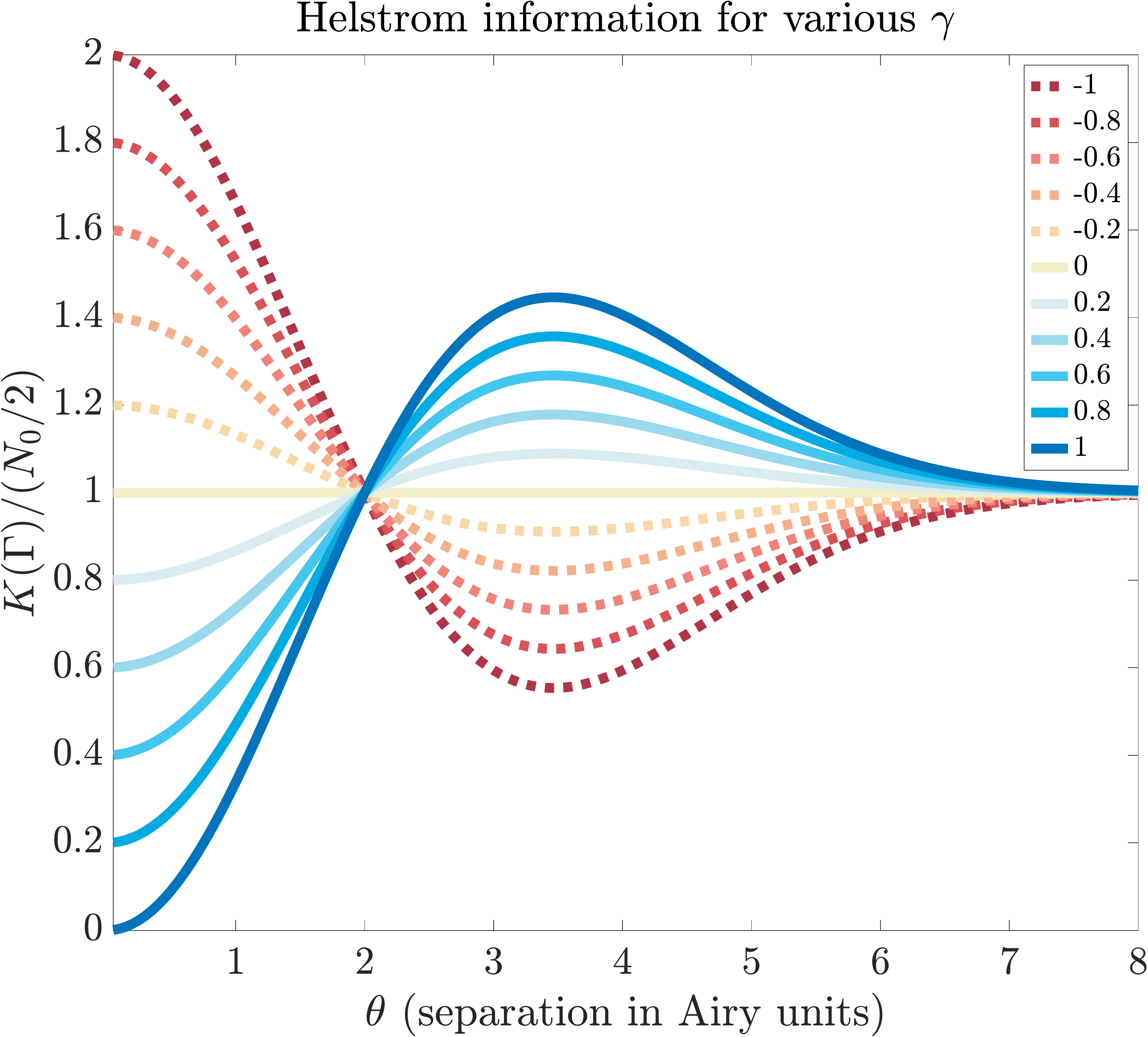}}
\caption{\label{imaging_helstrom}Numerically computed Helstrom
  information $K(\Gamma)$ for the estimation of the separation
  $\theta$ between two partially coherent sources. Each curve assumes
  a fixed degree of coherence $\gamma$, as denoted by the legend.}
\end{figure}

As also noticed in Ref.~\cite{kurdzialek21}, the Helstrom information
plotted in Fig.~\ref{imaging_helstrom} appears to be identical to the
Fisher information with the Hermite-Gaussian measurements plotted in
Ref.~\cite[Fig.~1]{tsang_comment19} for all values of $\gamma$,
suggesting that the measurements are optimal for any $\gamma$ and not
just for the $\gamma = 0$ case proven in Ref.~\cite{tnl}.
\end{example}

\section{\label{sec_channel}Poisson channels}
For the Poisson theory to remain useful for problems involving quantum
channels, the channels should preserve the Poissonianity of a state.
I call such channels Poisson channels. A trace-preserving completely
positive (TPCP) map $\Phi$ on $\rho_M$ is Poisson if
\begin{align}
\Phi\bk{\rho_M} &= \Bk{(1-\epsilon'')\tau_0'' \oplus \epsilon'' \tau_1''}^{\otimes M},
\label{poisson_map}
\end{align}
where $\tau_l'' \in \mathcal P_1(\mathcal H_l'')$, $\mathcal H_0''$ is
1-dimensional, and $\epsilon'' = O(\epsilon)$, such that the output is
still a Poisson state in the Poisson limit. Let $\rho(\Gamma)$ denote
the Poisson state with intensity operator $\Gamma$ in the sense of
Eqs.~(\ref{rare})--(\ref{intensity_op}).  The input-output relation in
the Poisson limit can be abbreviated as
\begin{align}
\Phi\Bk{\rho(\Gamma)} &= \rho\Bk{\tilde\Phi(\Gamma)},
&
\tilde\Phi(\Gamma) &= N''\tau_1'',
&
N'' &\equiv M\epsilon'',
\label{poisson_map2}
\end{align}
where
$\tilde\Phi : \mathcal P(\mathcal H_1) \to \mathcal P(\mathcal H_1'')$
is a map from an intensity operator to another intensity operator
induced by $\Phi$. Then the information quantities
$d_B(\Gamma,\Gamma')$, $D_s(\Gamma,\Gamma')$, $D(\Gamma\|\Gamma')$,
and $K(\Gamma)$ in Eqs.~(\ref{bures}), (\ref{alpha_div_Gamma}),
(\ref{KL_Gamma}), and (\ref{K_Gamma}) also observe monotonicity
relations with respect to $\tilde\Phi$.

\begin{proposition}
\label{prop_mono_gen}
Let $\tilde\Phi$ be a positive map defined in the sense of
Eqs.~(\ref{poisson_map}) and (\ref{poisson_map2}) with respect to a
TPCP map $\Phi$ in the Poisson limit. Then
\begin{align}
d_B(\Gamma,\Gamma') &\ge
d_B\Bk{\tilde\Phi(\Gamma),\tilde\Phi(\Gamma')},
\\
D_s(\Gamma,\Gamma') &\ge 
D_s\Bk{\tilde\Phi(\Gamma),\tilde\Phi(\Gamma')},
\\
D(\Gamma\|\Gamma')
&\ge 
D\Bk{\tilde\Phi(\Gamma)\|\tilde\Phi(\Gamma')},
\\
K(\Gamma) &\ge K\Bk{\tilde\Phi(\Gamma)}.
\quad
(\textrm{if $\tilde\Phi$ does not depend on $\theta$})
\end{align}
\end{proposition}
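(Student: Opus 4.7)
The plan is to reduce each monotonicity statement at the level of intensity operators to the corresponding monotonicity statement at the level of Poisson states, using Propositions~\ref{prop_uhlmann}--\ref{prop_KL} and \ref{prop_helstrom} as the translation dictionary. The key observation is that, by the definition of a Poisson channel in Eqs.~(\ref{poisson_map})--(\ref{poisson_map2}), the output $\Phi[\rho(\Gamma)]$ is itself a Poisson state with intensity operator $\tilde\Phi(\Gamma)$, so every information quantity computed on the output is again governed by the Sec.~\ref{sec_info} formulas, now with $\Gamma$ replaced by $\tilde\Phi(\Gamma)$ and $N$ replaced by $N'' = \trace\tilde\Phi(\Gamma)$.

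For the Bures--Wasserstein bound I would apply Proposition~\ref{prop_uhlmann} twice, once to the input pair $(\rho,\rho')$ and once to the output pair $(\Phi(\rho),\Phi(\rho'))$, to obtain $d_B^2(\Gamma,\Gamma') = -2\ln F(\rho,\rho')$ and $d_B^2(\tilde\Phi(\Gamma),\tilde\Phi(\Gamma')) = -2\ln F(\Phi(\rho),\Phi(\rho'))$. The standard monotonicity $F(\Phi(\rho),\Phi(\rho')) \ge F(\rho,\rho')$ for the TPCP map $\Phi$ \cite{hayashi} then yields the desired inequality after taking $-2\ln$. The same template handles $D_s$ and $D$: Proposition~\ref{prop_qchernoff} converts the monotonicity $C_s(\Phi(\rho),\Phi(\rho')) \ge C_s(\rho,\rho')$ into the $D_s$ bound (noting $s(1-s)\ge 0$), and Proposition~\ref{prop_KL} converts $D(\rho\|\rho') \ge D(\Phi(\rho)\|\Phi(\rho'))$ directly into the $\Gamma$-level inequality---here the $N$ bookkeeping even drops out, since the identity is already an equality.

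For the Helstrom-information bound I would use the $\theta$-independence of $\tilde\Phi$, inherited from $\Phi$, to invoke monotonicity of the SLD Fisher information, $K(\rho_M) \ge K(\Phi(\rho_M))$ \cite{hayashi}, for each finite $M$. Taking the Poisson limit on both sides and applying Proposition~\ref{prop_helstrom} on each side independently---so that $K(\rho_M)\to K(\Gamma)$ on the input and $K(\Phi(\rho_M))\to K(\tilde\Phi(\Gamma))$ on the output---then delivers $K(\Gamma) \ge K(\tilde\Phi(\Gamma))$ in the matrix-ordering sense.

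The main obstacle I anticipate is definitional rather than technical: one must check that the output-side applications of Propositions~\ref{prop_uhlmann}--\ref{prop_KL} and \ref{prop_helstrom} are actually legitimate, i.e.\ that $\Phi(\rho_M)$ really fits the Poisson template of Eqs.~(\ref{rare}) with $\epsilon'' = O(\epsilon)$, so that $N''$ and $N'''$ stay finite as $M\to\infty$ and the Poisson limit on the output commutes with the $M\to\infty$ limit on the input. This is precisely what the Poisson-channel definition in Eq.~(\ref{poisson_map}) guarantees, so once one has noted this, the rest of the argument is essentially bookkeeping.
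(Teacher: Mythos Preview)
Your proposal is correct and follows essentially the same approach as the paper: the paper's proof simply states that the inequalities ``follow from the monotonicity of the information quantities with respect to the density operators in Eqs.~(\ref{bures}), (\ref{alpha_div_Gamma}), (\ref{KL_Gamma}), and (\ref{K_Gamma}),'' which is exactly the translation-dictionary argument you spell out in detail. Your explicit identification of the requirement that $\Phi(\rho_M)$ again fit the Poisson template (guaranteed by Eq.~(\ref{poisson_map})) is the only step the paper leaves implicit, so your write-up is, if anything, more careful.
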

\begin{proof}
  These monotonicity relations follow from the monotonicity of the
  information quantities with respect to the density operators in
  Eqs.~(\ref{bures}), (\ref{alpha_div_Gamma}), (\ref{KL_Gamma}), and
  (\ref{K_Gamma}).
\end{proof}

To construct examples of $\tilde\Phi$ in the following, I assume
further that $\Phi$ is local, in the sense of
\begin{align}
\Phi(\rho_M) &= \Bk{\phi\bk{\tau}}^{\otimes M},
\label{map_form}
\end{align}
where
$\phi:\mathcal P(\mathcal H_0\oplus \mathcal H_1) \to \mathcal
P(\mathcal H_0''\oplus \mathcal H_1'')$ is a TPCP map that gives
\begin{align}
\phi(\tau) &= \bk{1-\epsilon''}\tau_0'' \oplus \epsilon'' \tau_1''.
\label{poisson_preserve}
\end{align}
Under the local assumption, $\tilde\Phi$ is affine, as shown by
Proposition~\ref{prop_affine} below, although Example~\ref{exa_loss}
later demonstrates that $\tilde\Phi$ need not be trace-preserving and
Example~\ref{exa_se} demonstrates that $\tilde\Phi$ need not be
linear.
\begin{proposition}
\label{prop_affine}
Given Eqs.~(\ref{map_form}) and (\ref{poisson_preserve}), $\tilde\Phi$
is affine.
\end{proposition}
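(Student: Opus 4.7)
The plan is to exploit the linearity of the single-mode TPCP map $\phi$ together with the block decomposition $\tau = (1-\epsilon)(\tau_0\oplus 0) + \epsilon(0\oplus\tau_1)$, isolate the $\mathcal H_1''$ component of the output $\phi(\tau)$, and track its behavior in the Poisson limit.

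First, letting $\Pi_1''$ denote the projector onto $\mathcal H_1''$ inside $\mathcal H_0'' \oplus \mathcal H_1''$ and applying it to both sides of Eq.~(\ref{poisson_preserve}), linearity of $\phi$ gives $\epsilon'' \tau_1'' = (1-\epsilon)A + \epsilon B(\tau_1)$, where $A \equiv \Pi_1''\phi(\tau_0\oplus 0)\Pi_1''$ is a fixed positive-semidefinite operator independent of $\tau_1$, and $B(\tau_1) \equiv \Pi_1''\phi(0\oplus\tau_1)\Pi_1''$ is a linear positive map in $\tau_1$. Multiplying by $M$, using $M\epsilon = N$ together with $B(N\tau_1) = B(\Gamma)$, I would arrive at $\tilde\Phi(\Gamma) = N''\tau_1'' = (M-N)A + B(\Gamma)$. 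Taking the Poisson limit $M\to\infty$ with $N$ fixed and setting $\tilde A \equiv \lim_{M\to\infty} M A$, the term $(M-N)A = MA - NA$ tends to $\tilde A$ since $NA \to 0$, so $\tilde\Phi(\Gamma) = \tilde A + B(\Gamma)$, which is affine in $\Gamma$ with constant part $\tilde A$ and linear part $B$.

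The main obstacle I expect is making the existence of the limit $\tilde A$ precise. Equation~(\ref{map_form}) is written as if $\phi$ were a single fixed TPCP map, but if $\phi$ is truly rigid as $M$ varies, then the condition $\epsilon'' = O(\epsilon)$ together with positivity of $A$ forces $\trace A = 0$ and hence $A = 0$, so $\tilde\Phi$ is in fact linear. A genuinely affine $\tilde\Phi$ with $\tilde A \neq 0$ (as needed for Example~\ref{exa_se}) requires $\phi$ to depend on $M$ in such a way that $MA$ converges to a finite operator. Articulating this scaling carefully, likely by treating $\phi$ as a family $\{\phi_M\}$ and imposing convergence of $M\Pi_1''\phi_M(\tau_0\oplus 0)\Pi_1''$ as part of the Poisson-limit hypothesis, is the only delicate step; once done, linearity of each $\phi_M$ in $\tau$ finishes the proof.
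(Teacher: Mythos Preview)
Your approach is correct and essentially the same as the paper's: the paper expresses $\phi$ in Kraus form $\phi(\tau)=\sum_\alpha A_\alpha\tau A_\alpha^\dagger$, writes each $A_\alpha$ in $2\times 2$ block form, and reads off the $\mathcal H_1''$ component of $\phi(\tau)$ as $\sum_\alpha\bigl[A_{\alpha 10}(1-\epsilon)\tau_0 A_{\alpha 10}^\dagger + A_{\alpha 11}\epsilon\tau_1 A_{\alpha 11}^\dagger\bigr]$, which is exactly your $(1-\epsilon)A+\epsilon B(\tau_1)$ with $A=\sum_\alpha A_{\alpha 10}\tau_0 A_{\alpha 10}^\dagger$ and $B(\tau_1)=\sum_\alpha A_{\alpha 11}\tau_1 A_{\alpha 11}^\dagger$; multiplying by $M$ and passing to the limit gives the affine form $\Gamma'+\sum_\alpha A_{\alpha 11}\Gamma A_{\alpha 11}^\dagger$. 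Your observation about the limit $\tilde A=\lim_{M\to\infty}MA$ is in fact sharper than the paper's presentation---the paper simply declares $\Gamma'$ to be ``the Poisson limit'' of $M\sum_\alpha A_{\alpha 10}\tau_0 A_{\alpha 10}^\dagger$ without dwelling on existence, and only later (Appendix~\ref{app_state} and Example~\ref{exa_se}) acknowledges that the relevant Kraus blocks must scale as $O(\sqrt{\epsilon})$, i.e.\ that $\phi$ implicitly depends on $M$, precisely as you anticipated.
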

\begin{proof}
  Let the Kraus form \cite{hayashi,watrous} of $\phi$ be
\begin{align}
\phi(\tau) &= \sum_{\alpha} A_\alpha \tau A_\alpha^\dagger.
\end{align}
Expressing $\tau$ and $A_\alpha$ in the matrix forms
\cite[Eq.~(1.64)]{watrous}
\begin{align}
\tau &= \begin{pmatrix}(1-\epsilon)\tau_0 & 0 \\ 0 & \epsilon \tau_1\end{pmatrix},
&
A_\alpha &= \begin{pmatrix}A_{\alpha 00} & A_{\alpha 01}\\ A_{\alpha 10} & A_{\alpha 11}
\end{pmatrix},
\end{align}
where $0$ denotes a zero operator and $A_{\alpha lm}$ is an operator
that maps $\mathcal H_m$ to $\mathcal H_l''$, it can be shown that
\begin{align}
\epsilon'' \tau_1'' &= \sum_\alpha \Bk{A_{\alpha 10}(1-\epsilon)\tau_0 A_{\alpha 10}^\dagger
+ A_{\alpha 11} \epsilon \tau_1 A_{\alpha 11}^\dagger}.
\label{map}
\end{align}
For the $\epsilon'' = O(\epsilon)$ requirement to be satisfied,
it is necessary for the map to satisfy
\begin{align}
\trace \sum_\alpha A_{\alpha 10}\tau_0 A_{\alpha 10}^\dagger &= O(\epsilon),
\label{weak_map}
\\
\trace\sum_\alpha A_{\alpha 11}\tau_1 A_{\alpha 11}^\dagger &= O(1).
\end{align}
Multiplying both sides of Eq.~(\ref{map}) by $M$ and taking the
Poisson limit, the output intensity operator becomes
\begin{align}
\tilde\Phi(\Gamma) &= 
\Gamma'+ \sum_\alpha A_{\alpha 11}\Gamma A_{\alpha 11}^\dagger,
\label{linear_map}
\end{align}
where $\Gamma'$ is the Poisson limit of
$M(1-\epsilon)\sum_\alpha A_{\alpha 10}\tau_0 A_{\alpha 10}^\dagger
\to M\sum_\alpha A_{\alpha 10}\tau_0 A_{\alpha 10}^\dagger$,
which does not depend on $\Gamma$. Equation~(\ref{linear_map}) is
affine with respect to $\Gamma$ (though not necessarily linear).
\end{proof}
In the following examples, the Kraus form of each $\phi$ is most
conveniently expressed in terms of some orthonormal bases of
$\mathcal H_0$, $\mathcal H_0''$, $\mathcal H_1$, and
$\mathcal H_1''$, which are written as $\{\ket{0}\}$, $\{\ket{0''}\}$,
$\{\ket{1_j}|j = 1,\dots,d\}$, and
$\{\ket{1_k''}|k = 1,\dots, d''\}$, respectively. The $\tau$ state,
for example, can be expressed as
\begin{align}
\bk{1-\epsilon}\tau_0 \oplus \epsilon \tau_1 &= \bk{1-\epsilon}\ket{0}\bra{0}
+ \epsilon \sum_{j,k} g_{jk}\ket{1_j}\bra{1_k},
\label{bases}
\end{align}
where $g$ is the density matrix of $\tau_1$ with respect to the basis
above.

\begin{example}[object-number-preserving channel]
  Let a Kraus form of $\phi$ be
\begin{align}
\phi(\tau) &= A_0 \tau A_0^{\dagger} + \sum_{\alpha} A_\alpha \tau A_\alpha^{\dagger},
\\
A_0 &= \ket{0''}\bra{0},
\quad
A_\alpha = \sum_{j,k} A_{\alpha jk}\ket{1_j''}\bra{1_k},
\end{align}
such that 
\begin{align}
\phi\Bk{\bk{1-\epsilon}\tau_0 \oplus \epsilon\tau_1} 
&= \bk{1-\epsilon}\tau_0'' \oplus \epsilon\phi_1(\tau_1),
\end{align}
where 
\begin{align}
\phi_1(\tau_1) &= \sum_{\alpha} A_\alpha \tau_1 A_\alpha^\dagger
\end{align}
is a TPCP map. In the Poisson limit,
\begin{align}
\tilde\Phi(\Gamma) &= \phi_1(\Gamma).
\label{preserve}
\end{align}
\end{example}
The trace-preserving nature of $\phi_1$ means that the channel
preserves the object number. More specific examples include the
unitary map
\begin{align}
\phi_1(\Gamma) &= U \Gamma U^\dagger,
\label{unitary}
\end{align}
which models a lossless linear device via a unitary operator $U$, and
the map
\begin{align}
\phi_1(\Gamma) &= \Lambda,
&
\Lambda_j &= \trace E_j \Gamma,
\label{POVM}
\end{align}
which models a 100\%-efficient object-counting measurement via the
POVM $E$ that gives Eqs.~(\ref{poisson}) and
(\ref{intensity}). Propositions~\ref{prop_mono} and
\ref{prop_mono_fisher} are hence special cases of
Proposition~\ref{prop_mono_gen}.

As the object number need not be conserved, $\tilde\Phi$ need not be
trace-preserving. The next examples of $\tilde\Phi$ are noteworthy
departures from the usual TPCP maps in quantum information theory.
\begin{example}[loss]
\label{exa_loss}
Let $\mathcal H_l'' = \mathcal H_l$ and
\begin{align}
\phi(\tau) &= I_0\tau I_0 + \sum_{j=1}^d A_j \tau A_j^\dagger + T \tau T,
\\
I_0 &= \ket{0}\bra{0},
\quad
A_j = \sqrt{1-\eta_j}\ket{0}\bra{1_j},
\quad
T = \sum_{j=1}^d \sqrt{\eta_j}\ket{1_j}\bra{1_j},
\end{align}
where $0\le \eta_j \le 1$ is the transmission coefficient for each mode.
Then
\begin{align}
\phi\Bk{(1-\epsilon)\tau_0\oplus\epsilon \tau_1} 
&= \bk{1-\epsilon \trace T \tau_1 T}\tau_0
\oplus \epsilon T \tau_1 T.
\end{align}
In the Poisson limit,
\begin{align}
\tilde\Phi(\Gamma) &= T \Gamma T.
\label{loss}
\end{align}
This $\tilde\Phi$ is not trace-preserving if any $\eta_j < 1$,
although it is still completely positive \cite[Theorem~2.22]{watrous}.
\end{example}

\begin{example}[spontaneous emission]
\label{exa_se}
Let $\mathcal H_l'' = \mathcal H_l$ and
\begin{align}
\phi(\tau) &= A_0 \tau A_0^\dagger + \sum_{j=1}^d A_j \tau A_j^\dagger + 
I_1 \tau I_1,
\\
A_0 &= \sqrt{1-\epsilon'}\ket{0}\bra{0},
\quad
A_j = \sqrt{\epsilon'\tau_{1 jj}'}\ket{1_j}\bra{0},
\\
I_1 &= \sum_{j=1}^d \ket{1_j}\bra{1_j},
\label{I1}
\end{align}
where $\epsilon' = O(\epsilon)$ and
$\tau_1' = \sum_j \tau_{1 jj}' \ket{1_j}\bra{1_j} \in \mathcal
P_1(\mathcal H_1)$ are properties of the channel and $\{\ket{1_j}\}$
is assumed to be the eigenbasis of $\tau_1'$ without loss of
generality.  Then
\begin{align}
\phi\Bk{\bk{1-\epsilon}\tau_0\oplus \epsilon\tau_1} &= 
\bk{1-\epsilon}\bk{1-\epsilon'}\tau_0 \oplus 
\Bk{\epsilon\tau_1 + \epsilon'(1-\epsilon)\tau_1'}.
\end{align}
In the Poisson limit,
\begin{align}
\tilde\Phi(\Gamma) &= \Gamma + \Gamma',
\label{se}
\end{align}
where $\Gamma' = M\epsilon'\tau_1'$. In particular, if the input
intensity operator is the zero operator $0$,
\begin{align}
\tilde\Phi(0) &= \Gamma'.
\label{se0}
\end{align}
If $\Gamma' \neq 0$, Eq.~(\ref{se0}) implies that this $\tilde\Phi$
is not trace-preserving and not even linear, as a trace-preserving or
linear map on the zero operator must give the zero operator.
\end{example}
Setting $\epsilon'$, the spontaneous-emission probability per temporal
mode, to be $O(\epsilon)$ ensures that the occurrence of objects
remains rare and $N'' = N + N'$ remains finite under the Poisson
limit.  In reality, however, a channel with a high
spontaneous-emission probability may make the final state non-Poisson
and may also involve other processes such as stimulated emission, so
one should double-check the accuracy of the Poisson approximation
before applying the Poisson theory to a real spontaneous-emission
channel.

Equation~(\ref{se}) is a quantum analog of the fact that the sum of
two independent Poisson processes is also a Poisson process.  It may
be useful for modeling dark counts or background noise
\cite{len20,lupo20,oh21}.  Another version of this fact is the
following.
\begin{example}[composition]
The tensor product of two Poisson states may be expressed as
\begin{align}
\rho(\Gamma) \otimes \rho(\Gamma') &= \rho(\Gamma \oplus \Gamma').
\label{join}
\end{align}
To derive this relation, consider
\begin{align}
\tau \otimes \tau' &= \Bk{(1-\epsilon)(1-\epsilon') \tau_0 \otimes \tau_0'}
\oplus \Bk{\epsilon(1-\epsilon') \tau_1 \otimes \tau_0'}
\oplus\Bk{\epsilon'(1-\epsilon) \tau_0 \otimes \tau_1'}
\oplus \bk{\epsilon\epsilon' \tau_1 \otimes \tau_1'}.
\label{tensor_rare}
\end{align}
Ignore all $O(\epsilon^2)$ terms in Eq.~(\ref{tensor_rare}), including
the two-object component $\epsilon\epsilon' \tau_1 \otimes
\tau_1'$. The output intensity operator in the Poisson limit becomes
\begin{align}
\bk{\Gamma \otimes \tau_0'}\oplus \bk{\tau_0\otimes \Gamma'}.
\label{join_Gamma}
\end{align}
Since $\tau_0=\ket{0}\bra{0}$ and $\tau_0' = \ket{0'}\bra{0'}$ are
1-dimensional, any tensor product of an operator with $\tau_0$ or
$\tau_0'$ is isomorphic to the original operator, and
Eq.~(\ref{join_Gamma}) can be abbreviated as $\Gamma \oplus \Gamma'$.
\end{example}

\begin{example}[marginalization]
  Let $\Gamma$ be an intensity operator on
  $\mathcal H_1 \oplus \mathcal H_1'$ and $\tau$ be a density operator
  on
  $\bk{\mathcal H_0 \oplus \mathcal H_1} \otimes \bk{\mathcal H_0'
    \oplus \mathcal H_1'}$. Let
\begin{align}
\phi(\tau) &= \trace'\tau
\end{align}
denote the partial trace with respect to
$\mathcal H_0' \oplus \mathcal H_1'$. A Kraus form is
\begin{align}
\phi(\tau) &= \bra{0'}\tau\ket{0'} + \sum_{j=1}^{d'}\bra{1_j'}\tau\ket{1_j'}.
\end{align}
Expressing $\tau$ in terms of the basis
$\{\ket{0}\otimes\ket{0'}, \ket{1_j}\otimes\ket{0'},
\ket{0}\otimes\ket{1_k'}| j = 1,\dots,d, k = 1,\dots,d'\}$, it is not
difficult to show that
\begin{align}
\tilde\Phi(\Gamma) &= I_1 \Gamma I_1,
\label{margin}
\end{align}
where $I_1$ is the projection operator into $\mathcal H_1$ given by
Eq.~(\ref{I1}).
\end{example}
Equations~(\ref{preserve}), (\ref{loss}), (\ref{se}), (\ref{join}),
and (\ref{margin}) in the examples demonstrate that, for Poisson
channels, the maps can be much simplified if expressed in terms of the
intensity operators. Whether there exists a more comprehensive
mathematical treatment of Poisson channels in the spirit of quantum
channel theory is an interesting open problem.

\section{Conclusion}
In conclusion, I have shown that the Poisson limit leads to elegant
results in quantum information theory, with the intensity operator
emerging as the central quantity. The familiar appearances of the
formulas mean that one may borrow existing results from general
information theory and apply them to the intensity operators in the
study of Poisson states. Although the unnormalized nature of the
operators may require extra care, one can still take advantage of many
known results concerning unnormalized positive-semidefinite matrices
\cite{dhillon07,amari16,cichocki10,watrous}, as the formulas here
coincide with many of them.

There are many potential generalizations and extensions.  It should be
possible to define the Poisson limit more generally for a tensor
product of non-identical states and an infinite-dimensional $\Gamma$,
in analogy with the more general Poisson limit theorems
\cite{falk11,snyder_miller}. The mathematical rigor of the limit may
be improved through quantum stochastic calculus \cite{parthasarathy92}
or nonstandard analysis \cite{leitz01,nelson87}. The mathematical
theory of Poisson channels may be refined further.

In terms of applications, some specialized aspects of the Poisson
theory have already found success in the study of weak thermal light
for optical sensing and imaging \cite{helstrom,tsang19a}, but a
consolidation of the results under the umbrella of Poisson quantum
information may reveal new insights.  Given the importance of the
classical Poisson theory in diverse areas \cite{falk11,snyder_miller},
the quantum Poisson theory is envisioned to see wider applications in
quantum technologies beyond optics, wherever a sequence of rare
objects may be encountered.

\section*{Acknowledgments}
I acknowledge helpful discussions with Zden\v{e}k Hradil, Nick
Vamivakas, Vincent Tan, and Marco Tomamichel.  This work is supported
by the National Research Foundation (NRF) Singapore, under its Quantum
Engineering Programme (Award~QEP-P7).

\appendix
\section{\label{app_poisson}Poisson limit theorem}
To state the Poisson limit theorem rigorously, I first rephrase the
measurement model in terms of measure theory.  Consider a multi-output
object-counting measurement of the $k$th temporal mode.  Let
$\mathcal Y$ denote the set of outputs and $\mathcal B$ denote its
$\sigma$-algebra. For example, for an ideal direct-imaging
measurement, each $(x,y) \in \mathcal Y = \mathbb R^2$ is a possible
position of the object on the image plane, while each
$B \in \mathcal B$ is a region of the image plane. Let
$X_k \in \{0,1\}$ be the total object number detected by the
measurement. Assume
\begin{align}
\expect\Bk{X_k = 0} &= \trace I_0\tau = 1-\epsilon,
\end{align}
where $\expect$ denotes the expectation, 
$[\textrm{statement}]$ is the Iverson bracket defined as
\begin{align}
\Bk{\textrm{statement}} &\equiv \begin{dcases}
1, & \textrm{statement is true},
\\
0, & \textrm{otherwise},
\end{dcases}
\end{align}
and $\expect[\textrm{statement}]$ is the probability that the
statement is true. Conditioned on $X_k = 1$, let $Y_k \in \mathcal Y$
be the output that detects the object. Assume
\begin{align}
\expect\Bk{Y_k \in B} &= \trace E(B)\tau_1,
\end{align}
where $E:\mathcal B \to \mathcal P(\mathcal H_1)$ is a POVM on
$\mathcal H_1$. The object count in a set of outputs can then be
modeled by the random measure
$\mathcal N^{(k)}:\mathcal B \to \{0,1\}$, defined as
\begin{align}
\mathcal N^{(k)}(B) &\equiv X_k\Bk{Y_k \in B}.
\end{align}
$X_k$ and $Y_k$ can be taken as independent random variables.
Integrated over $M$ temporal modes, the random measure becomes
\begin{align}
\mathcal M_M(B) &\equiv \sum_{k=1}^M \mathcal N^{(k)}(B)
= \sum_{k=1}^M X_k\Bk{Y_k \in B} = 
\sum_{j=1}^{L_M} \Bk{Z_j \in B},
\end{align}
where
\begin{align}
L_M &\equiv \sum_{k=1}^M X_k
\end{align}
is the detected object number in total and the set
$\{Z_1,\dots,Z_{L_M}\}$ is simply a relabeling of $\{Y_k|X_k = 1\}$,
which is a set of $L_M$ independent and identically distributed
(i.i.d.)  random elements. As $X$ is Bernoulli, $L_M$ is binomial. In
the Poisson limit, $L_M$ becomes Poisson, denoted by $L$, and
$\mathcal M_M \to \mathcal M$ is called the Kac empirical point
process \cite{vaart96}.
\begin{theorem}[see, for example, {Ref.~\cite[Sec.~3.5.2]{vaart96}}]
\label{thm_poisson}
With a Poisson $L$ and a sequence of i.i.d.\ random elements
$\{Z_1,Z_2,\dots\}$ that are independent of $L$, the Kac process
$\mathcal M:\mathcal B \to \mathbb N_0$, defined as
\begin{align}
\mathcal M(B) \equiv \sum_{j=1}^{L} \Bk{Z_j \in B},
\label{kac}
\end{align}
is a Poisson process.
\end{theorem}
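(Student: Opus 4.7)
The plan is to verify the two defining properties of a Poisson process on the measurable space $(\mathcal Y,\mathcal B)$: for any finite collection of pairwise disjoint sets $B_1,\dots,B_n \in \mathcal B$, the counts $\mathcal M(B_1),\dots,\mathcal M(B_n)$ should be mutually independent, and each $\mathcal M(B_i)$ should be Poisson distributed with mean $\lambda p_i$, where $\lambda \equiv \expect L$ and $p_i \equiv \expect[Z_1 \in B_i]$. Since a Poisson process is determined by its finite-dimensional distributions on disjoint sets, establishing these joint distributions is sufficient.

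First I would enlarge the partition by appending $B_0 \equiv \mathcal Y \setminus \bigcup_{i=1}^n B_i$ with $p_0 = 1 - \sum_{i \ge 1} p_i$, so that $\{B_0,B_1,\dots,B_n\}$ is a measurable partition of $\mathcal Y$. Conditioning on $L = \ell$, the $\ell$ i.i.d.\ elements $Z_1,\dots,Z_\ell$ fall into the cells of the partition independently with respective probabilities $p_0,\dots,p_n$, so the conditional joint law of $(\mathcal M(B_0),\dots,\mathcal M(B_n))$ is multinomial with parameters $\ell$ and $(p_0,\dots,p_n)$. Marginalizing over $\mathcal M(B_0)$, i.e., over the undetected cell, is precisely what converts the fixed-$\ell$ multinomial into independent Poissons once we average over the Poisson-distributed $L$.

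The core calculation is the standard ``Poissonization of the multinomial'': writing
\begin{align}
\expect\Bk{\mathcal M(B_1) = m_1,\dots,\mathcal M(B_n) = m_n}
&= \sum_{\ell} e^{-\lambda}\frac{\lambda^\ell}{\ell!}
\sum_{m_0 + \sum_i m_i = \ell}
\frac{\ell!}{m_0!\prod_i m_i!}p_0^{m_0}\prod_i p_i^{m_i},
\end{align}
swapping the order of summation, and summing out $m_0$ using $\sum_{m_0} (\lambda p_0)^{m_0}/m_0! = e^{\lambda p_0}$, so that the prefactor $e^{-\lambda}e^{\lambda p_0} = \prod_{i \ge 1}e^{-\lambda p_i}$ combines with the remaining product to give $\prod_{i \ge 1} e^{-\lambda p_i}(\lambda p_i)^{m_i}/m_i!$. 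This simultaneously establishes independence and the correct Poisson marginals, identifying the mean measure as $\mu(B) = \lambda\,\expect[Z_1\in B]$.

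I expect the only real subtlety to be measure-theoretic rather than combinatorial: one should ensure that $\mathcal M$ is a bona fide random measure, i.e., that $B \mapsto \mathcal M(B)$ is $\sigma$-additive almost surely, which follows because the defining sum in Eq.~(\ref{kac}) has only finitely many terms once $L$ is fixed, so countable additivity descends from countable additivity of the Iverson bracket $B \mapsto [Z_j \in B]$ for each $Z_j$. With this in place, the finite-dimensional joint law computed above characterizes $\mathcal M$ as the Poisson process with mean measure $\mu$, completing the proof; this is exactly the content invoked in Eqs.~(\ref{poisson}) and (\ref{intensity}) with $\lambda = N$ and $p_i = \trace E_i \tau_1 / N$.
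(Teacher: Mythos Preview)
The paper does not supply its own proof of this theorem; it is quoted as a known result with a citation to van der Vaart and Wellner (Ref.~\cite[Sec.~3.5.2]{vaart96}). So there is no ``paper's proof'' to compare against.

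Your argument is the standard one and is correct: condition on $L=\ell$, observe that the cell counts are multinomial, then average over the Poisson law of $L$ and sum out the complementary cell to factor the joint distribution into a product of Poisson marginals. The remark about $\sigma$-additivity is appropriate and sufficient at this level of rigor. One small slip in the closing sentence: with $\lambda = N$ you should have $p_i = \trace E_i\tau_1$ (not $\trace E_i\tau_1/N$), since $\tau_1$ is already normalized and $\Lambda_i = N\,\trace E_i\tau_1$; as written your mean measure would be off by a factor of $N$.
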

The intensity measure $\Lambda:\mathcal B \to \mathbb R_{\ge 0}$ of
the Poisson process is given by
\begin{align}
\Lambda(B) &= \expect\Bk{\mathcal M(B)} = N \trace E(B)\tau_1 = \trace E(B)\Gamma.
\end{align}
In the main text, the abbreviations
$\mathcal N_j^{(k)} = \mathcal N^{(k)}(B_j)$,
$\mathcal M_j = \mathcal M(B_j)$, $E_j = E(B_j)$, and
$\Lambda_j = \Lambda(B_j)$ are used, where $\{B_1,B_2,\dots\}$ is a
disjoint partition of $\mathcal Y$. $\{\mathcal M_j\}$ are independent
Poisson random variables by a basic property of Poisson processes, so
their probability distribution is given by Eq.~(\ref{poisson}).

\section{\label{app_state}Poisson states on Fock spaces}
$(\mathcal H_0\oplus\mathcal H_1)^{\otimes M}$ is called a toy Fock
space in Refs.~\cite{parthasarathy92,meyer95}.  The Poisson limit of
$\rho_M$ on the toy Fock space should remain well defined if one is
willing to adopt nonstandard analysis, taking $M$ to be an unlimited
natural number and $\epsilon = N/M$ to be infinitesimal
\cite{leitz01,nelson87}. In the context of ordinary calculus, one may
think of $\epsilon$ as $O(dt)$, where $dt$ is an infinitesimal
interval in time or a similar continuous degree of freedom.  In the
Poisson channel theory, Eq.~(\ref{weak_map}) implies that each Kraus
operator $A_{\alpha 10}$ should scale as
$O(\sqrt{\epsilon}) = O(\sqrt{dt})$, which is why quantum stochastic
calculus may be needed to define the operators rigorously in the
Poisson limit.

A more standard way to deal with an infinite number of temporal modes
in quantum stochastic calculus is to abandon the toy Fock space and
use instead the Fock space $\mathcal F(\mathcal L^2[0,1]\otimes\mathcal H_1)$,
where $\mathcal F$ is defined as \cite{parthasarathy92}
\begin{align}
\mathcal F(\mathcal H) &\equiv 
\bigoplus_{l=0}^\infty \mathcal H^{\otimes l},
&
\mathcal H^{\otimes 0} &= \mathcal H_0,
\end{align}
and the $\mathcal L^2[0,1]$ space, modeling the temporal modes for one object,
is defined as
\begin{align}
\mathcal L^2[0,1] &\equiv \BK{f\middle| f:[0,1] \to \mathbb C,\Avg{f,f} < \infty},
\\
\Avg{f,g} &\equiv \int_0^1 f^*(t)g(t) dt.
\end{align}
See Refs.~\cite{parthasarathy92,meyer95,leitz01} for discussions of
the relation between the toy Fock space and
$\mathcal F(\mathcal L^2[0,1]\otimes\mathcal H_1)$. It is an open problem,
outside the scope of this paper, how Poisson states may be expressed
on $\mathcal F(\mathcal L^2[0,1]\otimes\mathcal H_1)$.

If the objects are collected in one place and their arrival times are
ignored, one can assume a simpler Fock space
\begin{align}
\mathcal F(\mathcal H_1) &= \bigoplus_{l=0}^\infty \mathcal H_1^{\otimes l}
\end{align}
and transform $\rho_M$ into a state on $\mathcal F(\mathcal H_1)$ given by
\begin{align}
\Pi(\rho_M) &= \bigoplus_{l=0}^M P_{L_M}(l) \tau_1^{\otimes l},
\label{fock}
\\
P_{L_M}(l) &= \begin{pmatrix}M\\ l\end{pmatrix} \pi_0^{M-l}\pi_1^l,
\end{align} 
where $\Pi$ denotes an appropriate map, ${L_M}$ is the binomial random
variable for the total object number, and
$\tau_1^{\otimes 0} = \tau_0$ is assumed.  This representation may be
useful because any function of density operators that satisfies a
monotonicity relation with respect to TPCP maps can be computed using
$\Pi(\rho_M)$ instead of $\rho_M$, by virtue of the following
proposition.
\begin{proposition}
\label{prop_fock}
$\Pi$ is TPCP. Moreover, there exists another TPCP map $\Pi'$ such
that $\Pi'[\Pi(\rho_M)] = \rho_M$.
\end{proposition}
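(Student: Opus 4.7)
The plan is to construct $\Pi$ and a left inverse $\Pi'$ explicitly via Kraus representations, using the decomposition of the toy Fock space by Hamming weight.

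First I would exhibit the isomorphism $(\mathcal{H}_0\oplus\mathcal{H}_1)^{\otimes M} \cong \bigoplus_{s\in\{0,1\}^M}\bigotimes_{k=1}^M\mathcal{H}_{s_k} \cong \bigoplus_{l=0}^M \mathbb{C}^{\binom{M}{l}}\otimes\mathcal{H}_1^{\otimes l}$, which is valid because $\mathcal{H}_0$ is one-dimensional. For each binary string $s\in\{0,1\}^M$, let $V_s:\bigotimes_k\mathcal{H}_{s_k}\to\mathcal{H}_1^{\otimes|s|}$ be the unitary that deletes the vacuum factors and embeds the surviving $\mathcal{H}_1$ factors, in their natural left-to-right order, into the $|s|$-particle sector of $\mathcal{F}(\mathcal{H}_1)$, and let $P_s$ denote the orthogonal projector onto the $s$-subspace of the toy Fock space.

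For $\Pi$, I would take Kraus operators $K_s = V_s P_s$ indexed by $s\in\{0,1\}^M$. Since each $V_s$ is an isometry on the range of $P_s$, we have $\sum_s K_s^\dagger K_s = \sum_s P_s = I$, establishing that $\Pi$ is TPCP. Because $\tau=\pi_0\tau_0\oplus\pi_1\tau_1$ is block diagonal in $\mathcal{H}_0\oplus\mathcal{H}_1$, the state $\tau^{\otimes M}$ is block diagonal across the $s$-subspaces with the $s$-block equal to $\pi_0^{M-|s|}\pi_1^{|s|}\tau_1^{\otimes|s|}$ after identification via $V_s$; summing the images over the $\binom{M}{l}$ strings of weight $l$ yields exactly $P_{L_M}(l)\tau_1^{\otimes l}$, reproducing Eq.~(\ref{fock}).

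For $\Pi'$, I would define rescaled reverse Kraus operators $K_s' = \binom{M}{|s|}^{-1/2} V_s^\dagger$, each acting as zero outside the $|s|$-particle sector of $\mathcal{F}(\mathcal{H}_1)$. Since $V_s V_s^\dagger = I_{\mathcal{H}_1^{\otimes|s|}}$, this gives $\sum_s K_s'^\dagger K_s' = \sum_{l=0}^M I_{\mathcal{H}_1^{\otimes l}}$, which covers only the low sectors; to make $\Pi'$ genuinely TPCP on all of $\mathcal{F}(\mathcal{H}_1)$ I would append auxiliary Kraus operators $|v\rangle\langle n|$, where $\{|n\rangle\}$ is any orthonormal basis of $\bigoplus_{l>M}\mathcal{H}_1^{\otimes l}$ and $|v\rangle$ is any fixed unit vector in the toy Fock space; these act only on sectors unpopulated by $\Pi(\rho_M)$ and so do not affect the composition. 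Then $\Pi'[\Pi(\rho_M)] = \sum_{l=0}^M\binom{M}{l}^{-1}P_{L_M}(l)\sum_{s:|s|=l}V_s^\dagger\tau_1^{\otimes l}V_s$, and the identity $P_{L_M}(l)/\binom{M}{l} = \pi_0^{M-l}\pi_1^l$ combined with the block-diagonality of $\tau^{\otimes M}$ across the $s$-subspaces recovers $\tau^{\otimes M}$.

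The main obstacle is purely bookkeeping: keeping the $\binom{M}{l}$ multiplicities consistent in both directions of the isomorphism and defining $\Pi'$ on the $l>M$ sectors of the infinite-dimensional space $\mathcal{F}(\mathcal{H}_1)$ without breaking trace preservation. Conceptually $\Pi$ and $\Pi'$ are simply the pair of TPCP maps implementing, respectively, the partial trace over and the uniform re-embedding into the $\binom{M}{l}$-dimensional classical ``position register'' attached to each $l$-particle sector, so the only substantive verification is the computation above.
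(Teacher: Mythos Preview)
Your proof is correct and is essentially the same argument as the paper's, just phrased in the explicit Kraus language rather than as an operational procedure: your $K_s=V_sP_s$ is precisely the paper's ``measure the occupation pattern $s$ and move the detected objects into system B,'' and your $K_s'=\binom{M}{|s|}^{-1/2}V_s^\dagger$ is precisely its ``measure $L_M$, pick $s$ uniformly among weight-$L_M$ strings, and redistribute.'' The only extra detail you supply is the padding of $\Pi'$ on the $l>M$ sectors of $\mathcal F(\mathcal H_1)$, which the paper sidesteps by declaring the input Hilbert space of $\Pi'$ to be $\bigoplus_{l=0}^M\mathcal H_1^{\otimes l}$.
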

The proof is delegated to Appendix~\ref{app_proofs}.

In the Poisson limit, ${L_M}$ becomes Poisson, viz.,
\begin{align}
P_{L_M}(l) &\to P_{L}(l) = \exp(-N) \frac{N^l}{l!},
\end{align}
and the state on $\mathcal F(\mathcal H_1)$ becomes
\begin{align}
\Pi(\rho_M) &\to \Pi(\rho) = 
\bigoplus_{l=0}^\infty P_{L}(l)\tau_1^{\otimes l}
= \exp(-N) \bigoplus_{l=0}^\infty \frac{\Gamma^{\otimes l}}{l!}.
\label{rep2}
\end{align}
$\Pi(\rho)$ has the spirit of a Kac process and may serve as a
rigorous definition of a Poisson state without resorting to
nonstandard analysis. For example, a Poisson process can arise
directly from a measurement on $\Pi(\rho)$ as follows. Let $L$ be the
object number determined by a measurement in terms of the Kraus
operators $\{I_0,I_1,\dots\}$, where $I_l$ is the projection operator
into $\mathcal H_1^{\otimes l}$.  The conditional state becomes
$\tau_1^{\otimes {L}}$. If each object is then measured by the POVM
$E$, the outcomes are the i.i.d.\ random elements
$\{Z_1,\dots,Z_{L}\}$.  The Kac process
$\mathcal M(B) = \sum_{j=0}^{L}[Z_j \in B]$ is therefore a Poisson
process by Theorem~\ref{thm_poisson}.

Many other results in this paper can also be rederived from
Eq.~(\ref{rep2}).  Equations~(\ref{rare}) are more intuitive from the
physics point of view, however, and are therefore used in the main
text.

\section{\label{app_proofs}Proofs of Propositions~\ref{prop_qchernoff}, \ref{prop_KL}, and
  \ref{prop_fock}}
\begin{proof}[Proof of Proposition~\ref{prop_qchernoff}]
To derive Eq.~(\ref{qchernoff_poisson}), take similar steps
to those in the proof of Proposition~\ref{prop_uhlmann} to obtain
\begin{align}
C_s(\rho_M,\rho_M') &= \Bk{C_s(\tau,\tau')}^M
= \Bk{\sum_l C_s(\pi_l\tau_l,\pi_l'\tau_l')}^M
\\
&= 
\Bk{1 - s\epsilon - (1-s)\epsilon' + \epsilon^s \epsilon'^{1-s}
C_s(\tau_1,\tau_1') + O(\epsilon^2)}^M,
\end{align}
and then take the Poisson limit.

Note that Eq.~(\ref{qchernoff_distance}) takes the infimum of the
Poisson limit, but to prove that it is the same as the Poisson limit
of the infimum, one may need to prove the uniform convergence of
$C_s(\rho_M,\rho_M')$ to $C_s(\rho,\rho')$ over $0\le s\le 1$, beyond
the pointwise convergence just proved. This complication suggests that
the Poisson limit is an intricate mathematical issue, and the
Fock-space representation given by Eq.~(\ref{rep2}), the Poisson limit
of which has already been taken, may be a better starting point for
rigorous proofs. In particular, Eq.~(\ref{rep2}) leads directly to
Eqs.~(\ref{qchernoff_poisson}) and (\ref{qchernoff_distance}) without
any ambiguity.
\end{proof}

\begin{proof}[Proof of Proposition~\ref{prop_KL}]
Consider
\begin{align}
D(\rho_M\|\rho_M') &= M D(\tau\|\tau') 
\label{KLiid}
\\
&= M \sum_l \trace \pi_l\tau_l \Bk{\ln(\pi_l\tau_l) - \ln (\pi_l'\tau_l')}
\label{KLdirectsum}
\\
&= M \trace (1-\epsilon)\tau_0 \BK{\ln\Bk{(1-\epsilon)\tau_0 }
-\ln\Bk{(1-\epsilon')\tau_0 }}
+ M\trace \epsilon \tau_1 \Bk{\ln\bk{\epsilon \tau_1}-
\ln\bk{\epsilon' \tau_1'}}
\\
&= M(1-\epsilon)\ln\frac{1-\epsilon}{1-\epsilon'}
+ \trace \Gamma\bk{\ln \Gamma-\ln\Gamma'},
\label{KLintensity}
\end{align}
where Eq.~(\ref{KLiid}) has used Ref.~\cite[Eq.~(5.97)]{watrous} and
Eq.~(\ref{KLintensity}) has used Ref.~\cite[Eq.~(5.99)]{watrous}.  The
Poisson limit of the first term in Eq.~(\ref{KLintensity}) is
\begin{align}
M(1-\epsilon)\ln\frac{1-\epsilon}{1-\epsilon'}
&= 
(1-\epsilon)\ln\Bk{1+\epsilon'-\epsilon + O(\epsilon^2)}^M\to N' - N,
\end{align}
which leads to Eq.~(\ref{KLpoisson}).
\end{proof}

\begin{proof}[Proof of Proposition~\ref{prop_fock}]
  To prove that $\Pi$ is TPCP, I construct a physical procedure that
  takes $\rho_M$ as input and produces $\Pi(\rho_M)$ as output.
\begin{enumerate}
\item Designate the input system as system A and an auxiliary system
  with Hilbert space $(\mathcal H_0\oplus\mathcal H_1)^{\otimes M}$
  and initial state $\tau_0^{\otimes M}$ as system B.

\item For each $k = 1,\dots,M$:
\begin{enumerate}
\item Measure the number of objects in the $k$th temporal mode of
  system A with the Kraus operators $\{I_0,I_1\}$. Let
  $X_k \in \{0,1\}$ be the random variable from the measurement.
\item If $X_k = 0$, do nothing.
\item If $X_k = 1$, put the object detected in the $k$th temporal mode
  of system A (in conditional state
  $I_1\tau I_1/(\trace I_1\tau I_1)$) into an empty temporal
  mode of system B (any mode in state $\tau_0$).
\end{enumerate}
\item Discard all the empty temporal modes in system B so that its
  Hilbert space becomes $\mathcal H_1^{\otimes L_M}$, where
  $L_M \equiv \sum_{k=1}^M X_k$ is the number of detected objects.

\item Give system B as the output.
\end{enumerate}
With $\rho_M$ as the input state, $X$ is Bernoulli, $L_M$ is binomial,
the output state conditioned on $L_M$ is $\tau_1^{\otimes L_M}$, and the
unconditional output state becomes Eq.~(\ref{fock}).

To prove that a TPCP map giving $\Pi'[\Pi(\rho_M)] = \rho_M$ exists, I
construct a physical procedure that takes $\Pi(\rho_M)$ as input and
gives $\rho_M$ as output.
\begin{enumerate}
\item Assume a system A with Hilbert space
  $(\mathcal H_0\oplus \mathcal H_1)^{\otimes M}$ and initial state
  $\tau_0^{\otimes M}$. Assume a system B with Hilbert space
  $\oplus_{l=0}^M \mathcal H_1^{\otimes l}$.  For the $\Pi'$ map,
  assume that system A is auxiliary and system B is the input.

\item Measure the number of objects $L_M$ in system B with the Kraus
  operators $\{I_0,I_1,\dots\}$.

\item Generate a random classical bit sequence
  $X = \{X_1,\dots,X_M\} \in \{0,1\}^M$ with $L_M$ bits equal to $1$ and
  $M-L_M$ bits equal to $0$. The probability of each sequence is assumed
  to be
\begin{align}
P_{X|L_M}(x|l) &= \begin{pmatrix}M\\ l\end{pmatrix}^{-1} \Bk{\sum_k x_k=l}.
\end{align}
\item For each $k = 1,\dots,M$: 
\begin{enumerate}
\item If $X_k = 0$, do nothing.

\item If $X_k = 1$, put an object in system B into the $k$th temporal
  mode of system A.
\end{enumerate}
\item Give system A as the output.
\end{enumerate}
With $\Pi(\rho_M)$ as the input state, $L_M$ is binomial, and the
unconditional probability of each bit sequence becomes
\begin{align}
P_X(x) &= \sum_{l=0}^M  P_{L_M}(l) P_{X|L_M}(x|l)
= \sum_{l=0}^M \pi_1^l\pi_0^{M-l}\Bk{\sum_k x_k=l}
= \prod_{k=0}^M  \bk{[x_k=0]\pi_0 + [x_k=1]\pi_1},
\end{align}
meaning that $X$ is Bernoulli.  Conditioned on $X$, the output state
is
\begin{align}
\rho_M(X) &= \bigotimes_{k=1}^M \bk{[X_k= 0]\tau_0 \oplus [X_k=1] \tau_1}.
\end{align}
The unconditional output state is hence 
\begin{align}
\expect\Bk{\rho_M(X)}
&= \bigotimes_{k=1}^M \bk{\pi_0\tau_0 \oplus \pi_1\tau_1} = \rho_M.
\end{align}
\end{proof}

\section{\label{app_imaging}Computation of the Helstrom information
for Example~\ref{exa_imaging}}
To compute Eqs.~(\ref{K_Gamma}) and (\ref{SLD_Gamma}) numerically, I
follow the method in Refs.~\cite{genoni19,peng21}. Express each
operator in the problem as
\begin{align}
A &= \sum_{j,k}\tilde A_{jk}\ket{\psi_j}\bra{\psi_k},
\end{align}
where the set $\{\ket{\psi_j}\}$ spans $\mathcal H_1$. The set need
not be orthogonal or normalized. Let
$\Delta \equiv \partial\Gamma/\partial\theta$ for a scalar parameter
$\theta$.  Then Eq.~(\ref{SLD_Gamma}) can be expressed as
\begin{align}
2\tilde{\Delta} &= \tilde S G \tilde\Gamma + \tilde\Gamma G \tilde S,
\label{Stilde}
\end{align}
where $G_{jk} = \braket{\psi_j|\psi_k}$, while Eq.~(\ref{K_Gamma}) can
be expressed as
\begin{align}
K &= \trace G\tilde S G \tilde\Delta.
\label{Ktilde}
\end{align}
For Example~\ref{exa_imaging}, $\ket{\psi_1}$ and $\ket{\psi_2}$ are
defined in Eqs.~(\ref{psi12}). Define also
\begin{align}
\ket{\psi_3} &\equiv \parti{\ket{\psi_1}}{\theta},
&
\ket{\psi_4} &\equiv \parti{\ket{\psi_2}}{\theta}.
\end{align}
Then
\begin{align}
\tilde\Gamma &= N_0\begin{pmatrix}1 & \gamma & 0 & 0\\
\gamma^* & 1 & 0 & 0\\
0 & 0 & 0 & 0\\
0 & 0 & 0 & 0\end{pmatrix},
&
\tilde\Delta &= N_0\begin{pmatrix} 0 & 0 & 1 &\gamma\\
0 & 0 & \gamma^* & 1\\
1 & \gamma & 0 & 0\\
\gamma^* & 1 &0  & 0
\end{pmatrix}.
\end{align}
With the Gaussian $\psi(x)$ given by Eq.~(\ref{gauss_psi}), $G$ can
also be computed analytically (with the help of the Symbolic Math
Toolbox on Matlab (ver.~2020b, Mathworks)):
\begin{align}
G &= \begin{pmatrix} 1 & \exp(-\frac{\theta^2}{8}) & 0 & 
-\frac{\theta}{8}\exp(-\frac{\theta^2}{8})\\
\exp(-\frac{\theta^2}{8}) & 1  &-\frac{\theta}{8}\exp(-\frac{\theta^2}{8}) & 0\\
0 & -\frac{\theta}{8}\exp(-\frac{\theta^2}{8}) & \frac{1}{16} & 
\frac{\theta^2-4}{64}\exp(-\frac{\theta^2}{8})\\
-\frac{\theta}{8}\exp(-\frac{\theta^2}{8})& 0 & 
\frac{\theta^2-4}{64}\exp(-\frac{\theta^2}{8})& \frac{1}{16}\end{pmatrix}.
\end{align}
For each $\gamma$ from $-1$ to $1$ with step size $0.2$ and each
$\theta$ from $0.05$ to $8$ with step size $0.05$, Eq.~(\ref{Stilde})
is solved for $\tilde S$ using the \texttt{lyap} function on Matlab
and $K$ is computed using Eq.~(\ref{Ktilde}). The results are
plotted in Fig.~\ref{imaging_helstrom}.

To prevent the error ``\texttt{The solution of this Lyapunov equation
  does not exist or is not unique}'' in using the \texttt{lyap}
function on Matlab, a tiny positive number $\delta$ is artificially
introduced to $\tilde\Gamma_{33}$ and $\tilde\Gamma_{44}$.  To check
that $\delta$ does not affect the results significantly, the numerical
analysis is repeated with various $\delta$, and it is found that the
plots do not show any perceptible change with $\delta$ ranging from
$10^{-13}$ to $10^{-5}$. Figure~\ref{imaging_helstrom} uses
$\delta = 10^{-13}$.

The Matlab code is reproduced below.  The \texttt{lbmap} routine to
generate a color map for colorblind viewers can be downloaded from
Ref.~\cite{bemis21}.

\begin{verbatim}
function helstrom
theta = .05:.05:8;
gamma = -1:.2:1;
delta = 1e-13;
hold off;
for m = 1:length(gamma)
  Gamma = [1 gamma(m) 0 0; conj(gamma(m)) 1 0 0; 0 0 delta 0; 0 0 0 delta];
  dGamma = [0 0 1 gamma(m); 0 0 conj(gamma(m)) 1; ...
            1 gamma(m) 0 0; conj(gamma(m)) 1 0 0];
  for n = 1:length(theta)
    G11 = 1;
    G12 = exp(-theta(n)^2/8);
    G13 = 0;
    G14 = -theta(n)/8*exp(-theta(n)^2/8);
    G22 = 1;
    G23 = G14;
    G24 = 0;
    G33 = 1/16;
    G34 = (theta(n)^2-4)/64*exp(-theta(n)^2/8);
    G44 = 1/16;
    G = [G11 G12 G13 G14; conj(G12) G22 G23 G24; ...
         conj(G13) conj(G23) G33 G34; conj(G14) conj(G24) conj(G34) G44];
    S = lyap(Gamma*G,-2*dGamma);
    K(n) = 2*trace(G*S*G*dGamma);
  end
  if gamma(m) < 0
    plot(theta,K,':');
  else
    plot(theta,K);
  end
  hold on;
end
axis([min(theta) max(theta) 0 2]);
legend(num2str(gamma.'));
xlabel('$\theta$ (separation in Airy units)');
ylabel('$K(\Gamma)/(N_0/2)$') ;
title('Helstrom information for various $\gamma$');
newcolor = lbmap(length(gamma),'RedBlue');
colororder(newcolor);
\end{verbatim}

\bibliographystyle{apsrev4-1}
\bibliography{research2}

\end{document}